\newtheorem{theorem}{Theorem}
\newtheorem{lemma}{Lemma}
\newtheorem{problem}{Problem}
\newtheorem{remark}{Remark}
\tikzstyle{block} = [draw,rectangle, rounded corners, minimum width=1cm, minimum height=0.8cm,text centered, line width=2pt ]
\tikzstyle{arrow} = [thick,->,>=stealth,line width=2pt]
\tikzset{cross/.style={cross out, draw=black, minimum size=2*(#1-\pgflinewidth), inner sep=0pt, outer sep=0pt},
cross/.default={1pt}}
\tikzset{
  shift left/.style ={commutative diagrams/shift left={#1}},
  shift right/.style={commutative diagrams/shift right={#1}}
}
\newcommand\rsmraise[1]{%
  \ifx#1\displaystyle .8\else
    \ifx#1\textstyle .8\else
      \ifx#1\scriptstyle .6\else
        .45%
      \fi
    \fi
  \fi}
\tikzstyle{block} = [draw,rectangle, rounded corners, minimum width=1cm, minimum height=0.8cm,text centered, line width=2pt ]
\tikzstyle{arrow} = [thick,->,>=stealth,line width=2pt]
\tikzset{
    addarrow/.style={decoration={markings, mark=at position 1 with {\arrow{stealth}}},
                     postaction={decorate}}
}
\newcommand{\tp}{\intercal}		
\newcommand{\R}{\mathbb{R}}			
\newcommand{\ind}{\mathds{1}}		
\DeclareMathOperator{\ee}{\mathbb{E}}			
\DeclareMathOperator{\prob}{\mathbb{P}}			
\DeclareMathOperator{\vecc}{\mathbf{vec}}		
\DeclareMathOperator{\tr}{\mathbf{tr}}			
\DeclareMathOperator{\cov}{\mathbf{cov}}		
\title{\LARGE \bf
Decentralized Control of Stochastically Switched Linear System with Unreliable Communication
}
\author{Seyed Mohammad Asghari, Yi Ouyang, and Ashutosh Nayyar
\thanks{
S. M. Asghari and A. Nayyar are with the Department of Electrical
Engineering, University of Southern California, Los Angeles, CA.
Y. Ouyang is currently with the University of California, Berkeley, CA. 
Email: asgharip@usc.edu; ouyangyi@berkeley.edu; ashutosn@usc.edu.}
 \thanks{This research was supported by NSF under grants ECCS 1509812 and CNS 1446901.}
}
\begin{document}

\maketitle
\thispagestyle{empty}
\pagestyle{empty}



{\color{black}
\begin{abstract}
We consider a networked control system (NCS) consisting of two plants, a global plant and a local plant, and two controllers, a global controller and a local controller. 
The global (resp. local) plant follows discrete-time stochastically switched linear dynamics with a continuous global (resp. local) state and a discrete global (resp. local) mode.
We assume that the state and mode of the global plant are observed by both controllers while the state and mode of the local plant are only observed by the local controller. The local controller can inform the global controller of the local plant's state and mode through an unreliable TCP-like communication channel where successful transmissions are acknowledged. The objective of the controllers is to cooperatively minimize a modes-dependent quadratic cost over a finite time horizon. Following the method developed in \cite{Ouyang_Asghari_Nayyar:2016} and \cite{Asghari2016optimal}, we construct a dynamic program based on common information and a decomposition of strategies, and use it to obtain explicit optimal strategies for the controllers. In the optimal strategies, both controllers compute a common estimate of the local plant's state. The global controller's action is linear in the state of the global plant and the common estimated state, and the local controller's action is linear in the actual states of both plants and the common estimated state. Furthermore, the gain matrices for the global controller depend on the global mode and its observation about the local mode, while the gain matrices for the local controller depend on the actual modes of both plants and the global controller's observation about the local mode.
\end{abstract}
}

\section{Introduction}

The widespread applications of decentralized control in networked control systems (NCSs), power systems, smart buildings,  autonomous vehicles and economic models  have made it a topic of interest in the recent years
\cite{HespanhaSurvey, jadbabaie2003coordination, zampolli2006optimal}. Despite increased efforts in advancing this field, decentralized control still remains challenging with a broad range of problems to be investigated. One class of such problems is the control of switched systems where switching is governed by stochastic parameters that are partially observed by the individual decision makers. The stochastic parameters can represent local and global system conditions like the state of communication links among the controllers, mission objectives, physical conditions, and changes in the system environment. 

While centralized control of switched systems has been extensively addressed by several studies \cite{tabuada2009verification, borrelli2003constrained, costa1995discrete, chizeck1988optimal,sinopoli2004kalman,imer2006optimal}, the  decentralized counterpart has been the focus of relatively sparse studies  \cite{xiong2009local,Farokhi_TAC_2015, mishra2015team}. This is due to the fact that decentralized control problems are generally difficult to solve (see \cite{Witsenhausen:1968, LipsaMartins:2011b,blondel2000survey}). Not only are  linear strategies suboptimal for such problems, but the problem of finding the best linear strategies may not be convex \cite{YukselBasar:2013}. Existing methods for solving decentralized control problems require either specific information structures, such as static \cite{Radner:1962}, partially nested \cite{HoChu:1972,LamperskiDoyle:2011,LessardNayyar:2013,ShahParrilo:2013, Nayyar_Lessard_2015,Lessard_Lall_2015}, stochastically nested \cite{Yuksel:2009}, switched partially nested \cite{Ouyang_Asghari_Nayyar:2017} or other specific properties, such as quadratic invariance \cite{RotkowitzLall:2006} or substitutability \cite{AsghariNayyar:2015, asghari2016dynamic} which make the decentralized control problems "simpler" than the general problems. 

{\color{black}
In this paper, we consider a NCS with discrete-time stochastically switched linear dynamics.
The NCS includes two plants, called "global" plant and "local" plant, and two controllers, namely global controller $C^0$ and local controller $C^1$ as shown in Fig. \ref{fig:SystemModel}. 
Associated with the global (resp. local) plant, there is a continuous global (resp. local) state and a discrete global (resp. local) mode.
The discrete global (resp. local) mode allows us to model non-linear dynamics such as abrupt environmental disturbances, component failures or repairs, changes in subsystems interconnections, and abrupt changes in the operation point \cite{costa2006discrete}.

We assume that the control action of the global controller $C^0$ can affect both plants while the control action of the local controller $C^1$ can only affect the local plant as its name suggests. 
We assume that the state and mode of global plant are observed by both controllers while the state and mode of local plant is only observed by the local controller $C^1$. In addition to the observations, controller $C^1$ can inform controller $C^0$ of the local plant's state and mode through a communication channel with random packet drops. We assume a TCP-like protocol \cite{Schenato2007} where successful transmissions of packets are acknowledged by controller $C^0$. The objective of the controllers is to cooperatively minimize a modes-dependent quadratic cost over a finite time horizon. The dependence of cost on the global and local modes allows us to model mode-dependent changes in the control objective.

This stochastically switched system model can arise in various NCS applications. For example, the operation of a service robot in a smart home can be modeled by the NCS where the robot is the local plant and the smart home is the global plant. The global controller can be the HVAC (heating, ventilation, and air conditioning) system that controls the global state and mode including temperature and indoor air quality, while the local controller directly controls the robot with switched linear dynamics. Depending on the local situation, the robot can transmit information through wireless communication to the global controller to request adjustments on certain comfort parameters of the smart home.
}

The  decentralized control problem we consider in this paper does not belong to the simpler classes mentioned earlier due to either the unreliable communication or the switching dynamics and cost function. A closely related problem has been studied in \cite{mishra2015team}; however; the information structure there is partially nested. We developed a method to obtain optimal controllers for a decentralized control problem with unreliable communication in \cite{Ouyang_Asghari_Nayyar:2016, Asghari2016optimal,Ouyang_Asghari_Nayyar:2018} using ideas from the common information approach \cite{nayyar2013decentralized}. More specifically, we proposed a modified dynamic program based on a decomposition of strategies which  can be explicitly solved to find the optimal controllers for the problems of \cite{Ouyang_Asghari_Nayyar:2016, Asghari2016optimal}. Although this method fails to provide any structure for the dynamic program of the switched system considered in this paper, we  show that it can be generalized to capture the switching nature of system dynamics of the plants. Using this method, we obtain explicit optimal strategies for the controllers.  In the optimal strategies, both controllers compute a common estimate of the local plant's state. Global controller $C^0$'s action is linear in the state of global plant and the common estimated state, and the local controller $C^1$'s action is linear in both the actual states of plants and the common estimated state. Furthermore, the gain matrix for controller $C^0$ depends on the global mode and its observation about the local mode, while the gain matrices for controller $C^1$ depend on the actual global and local modes of both plants and controller $C^0$'s observation about the local mode. 

\subsection{Notation}
Random variables/vectors are denoted by upper case letters, their realization by the corresponding lower case letters.
For a sequence of column vectors $X, Y, Z,...$, the notation $\vecc(X,Y,Z,...)$ denotes vector $[X^{\tp}, Y^{\tp}, Z^{\tp},...]^{\tp}$. The transpose and trace of matrix $A$ are denoted by $A^{\tp}$ and $\tr(A)$, respectively. 
Uppercase letters are also used to denote matrices. Matrix  dimensions are not specified if they can be inferred from the context.
The notation $\mathbf{I}_{n}$ and $\mathbf{0}_{n \times m}$ is used to denote a $n \times n$ identity matrix and a $n \times m$ zero matrix, respectively. For block matrix $B$ and $r=0,1$, $[B]_{r \bullet}$ denotes the $r$-th block row of $B$. Note that the top block row corresponds to $r=0$ and the bottom corresponds to $r=1$.  For example, for 
$B = \begin{bmatrix}
\mathbf{I}_{m} &\mathbf{0}_{m \times n} \\ \mathbf{0}_{n \times m}  &\mathbf{I}_{n}
\end{bmatrix}$, $[B]_{0 \bullet}= [\mathbf{I}_{m}\hspace{3mm} \mathbf{0}_{m \times n}]$ and $[B]_{1 \bullet}= [\mathbf{0}_{n \times m} \hspace{3mm} \mathbf{I}_{n}]$.

In general, subscripts are used as time index while superscripts are used to index controllers.
For time indices $t_1\leq t_2$, $X_{t_1:t_2}$ (resp. $g_{t_1:t_2}(\cdot)$) is the short hand notation for the variables $(X_{t_1},X_{t_1+1},...,X_{t_2})$ (resp.  functions $(g_{t_1}(\cdot),\dots,g_{t_1}(\cdot))$). 
{\color{black} Similarly, $X^{0:1}$ is the shorthand notation for the variables $X^0,X^1$.}
$\prob(\cdot)$, $\ee[\cdot]$, and $\cov(\cdot)$ denote the probability of an event, the expectation of a random variable/vector, and the covariance matrix of a random vector, respectively.
For random variables/vectors $X$ and $Y$, $\ee[X|y] := \ee[X|Y=y]$. 
For a strategy $g$, we use $\prob^g(\cdot)$ (resp. $\ee^g[\cdot]$) to indicate that the probability (resp. expectation) depends on the choice of $g$. 
Let $\Delta(\R^n)$ denote the set of all probability measures on $\R^n$ with finite second moment. 
For any $\theta \in \Delta(\R^n)$, $\theta(E) = \int_{\R^n} \mathds{1}_{E}(x) \theta(dx)$ denotes the probability of event $E$ under $\theta$. 
The mean and the covariance of a distribution $\theta \in \Delta(\R^n)$ are denoted by $\mu(\theta)$ and $\cov(\theta)$, respectively, and are defined as $\mu(\theta) = \int_{R^n} x \theta(dx)$ and 
$\cov(\theta) = \int_{R^n} (x - \mu(\theta)) (x - \mu(\theta))^{\tp} \theta(dx)$.  

\subsection{Organization}
The rest of the paper is organized as follows. We introduce the system model and formulate the two-controller optimal control problem in Section \ref{sec:model}. In Section \ref{sec:equivalent_prob}, following the common information approach, we  provide a dynamic program for solving this problem. 
Section \ref{sec:decomposition_DP} introduces a decomposition for the control strategies and provides a modified dynamic program based on this decomposition. We solve the modified dynamic program in Section \ref{sec:solution} and provide the optimal strategies for the controllers. Section \ref{sec:conclusion} concludes the paper. The proofs of all the technical results of the paper appear in the Appendices.

\section{System Model and Problem Formulation}
\label{sec:model}

Consider the discrete-time switched linear system with two plants and two associated controllers shown in Fig. \ref{fig:SystemModel}.
The two-plant system follows the switched linear dynamics described by
\begin{align}
&\begin{bmatrix}
X_{t+1}^0 \\ X_{t+1}^1
\end{bmatrix} = A(M_t^{0:1})\begin{bmatrix}
X_{t}^0 \\ X_{t}^1
\end{bmatrix}
+B(M_t^{0:1})\begin{bmatrix}
U_{t}^0 \\ U_{t}^1
\end{bmatrix}+\begin{bmatrix}
W_{t}^0 \\ W_{t}^1
\end{bmatrix}
 \label{Model:system_0}
\end{align}
for $t=0,\dots,T$ where $X_t^n \in \R^{d_X^n}$ is the continuous state and $M^n_t \in \mathcal{M}^n = \{1,\ldots, \kappa^n \}$ with $\kappa^n <\infty$ is the discrete mode of plant $n$ for $n=0,1$. $U_t^n \in \R^{d_U^n}$ is the control action of controller $C^n$ and $W_t^n$ is a zero mean noise vector.
We assume that the collection of random variables $X_0^{n}, W_{t}^{n},M_{t}^{n}$ for $n=0,1, t=0,1,\dots,T,$ are independent and with distributions 
$\pi_{X_0^n}$ and $\pi_{W_t^n},\pi_{M^n}$, respectively. We further assume that $X_0^{0:1},W_{0:T}^{0:1}$ have finite second moments.

\begin{figure}
\begin{center}
\begin{tikzpicture}
\node [rectangle,draw,minimum width=1.2cm,minimum height=1cm,line width=1pt,rounded corners]at (-1.4,0) (1) {
\begin{small}
\begin{tabular}{c}
Global  \\ Controller $C^0$
\end{tabular}
\end{small}}; 
\node [rectangle,draw,minimum width=1.2cm,minimum height=1cm,line width=1pt,rounded corners]at (4.45,0) (2) {
\begin{small}
\begin{tabular}{c}
Local \\ Controller $C^1$
\end{tabular}
\end{small}
}; 
\node [rectangle,draw,minimum width=1.5cm,minimum height=1cm,line width=1pt,rounded corners]at (-1,2.5) (3){\begin{small}
\begin{tabular}{c}
Global Plant 0 \\ $X_t^0, M_t^0$
\end{tabular}
\end{small}};
\node [rectangle,draw,minimum width=1.5cm,minimum height=1cm,line width=1pt,rounded corners]at (4,2.5) (4) {\begin{small}
\begin{tabular}{c}
Local Plant 1 \\ $X_t^1, M_t^1$
\end{tabular}
\end{small}}; 



\draw [line width=1pt] (2.5,0) to[cspst] (0.5,0);
\path[thick,->,>=stealth,dashed,line width=1pt]
           (4,0.5) edge node {}   (4,2.0)           
           (-1,0.5) edge node {}   (-1,2.00)  ;
                      \path[thick,->,>=stealth,dash dot,line width=1pt]
  ($(3.east)$) edge node{} ($(4.west)$);
    \path[thick,->,>=stealth, shift left=.30ex,line width=1pt]
        (3.7,2.0) edge node {}   (3.7,0.5)
         (-0.05,1.95) edge node{} (3.65,0.5)
         (-1.3,2.0) edge node {}   (-1.3,0.5) ;
\path[thick,->,>=stealth, line width=1pt]
 (0.5,-0) edge node {} (-0.1,-0)
(4.0,-1) edge node {} (4.0,-0.5);
 
\path[thick, line width=1pt]
 (3.1,-0.0) edge node {} (2.5,-0.0)
 (0.5,-0) edge node {} (0.5,-1)
  (0.5,-1) edge node {} (4,-1);
 
         
         
\node[] at (4.3,1) {$U_t^1$};
\node[] at (-0.7,1) {$U_t^0$};
\node[] at (1.6,-0.5) {$\Gamma_t$};
\node[] at (2.5,0.3) {$X_t^1,M_t^1$};
\node[] at (0.4,0.3) {$Z_t, \tilde Z_t$};

\draw[thick,->,>=stealth, dashed, line width=1pt] (1)--node[above, sloped, pos=.3]{$U^0_t$}(4);
\end{tikzpicture}
\caption{Two-controller system model. The binary random variable $\Gamma_t$ indicates whether packets are transmitted successfully. 
{\color{black}Solid lines indicate communication links, dashed lines indicate control links, and the dash-dot line indicates that the global state and mode can affect the local state.}}
\label{fig:SystemModel}
\end{center}
\end{figure}
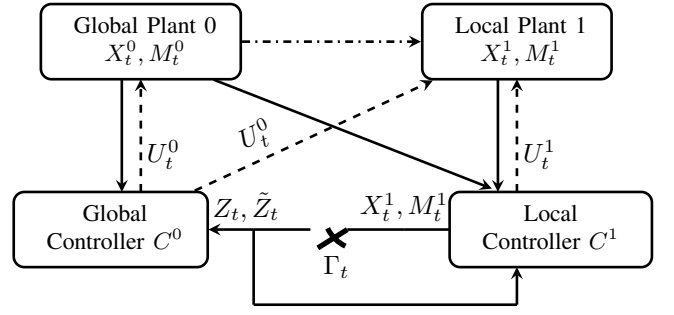

The system matrices $A(M_t^{0:1})$ and $B(M_t^{0:1})$ depend on the system modes $M_t^0$ and $M^1_t$.
As illustrated in Fig. \ref{fig:SystemModel}, state and mode of plant $0$ and the action of controller $C^0$ can affect plant $1$, but state and mode of plant $1$ and the action of controller $C^1$ do not affect plant $0$.

\vspace{-4mm}
\begin{small}
\begin{align}
A(M_t^{0:1}) = \begin{bmatrix}
A^{00}(M_t^{0}) & \mathbf{0}\\
A^{10}(M_t^{0:1}) & A^{11}(M_t^{0:1})
\end{bmatrix},
\\
B(M_t^{0:1}) = \begin{bmatrix}
B^{00}(M_t^{0}) & \mathbf{0}\\
B^{10}(M_t^{0:1}) & B^{11}(M_t^{0:1})
\end{bmatrix}.
\end{align}
\end{small}
Since $M^0_t$ affects both plants and $M^1_t$ only affects plant $1$, we refer to $M^0_t$ as the global mode and to $M^1_t$ as the local mode. 
For notational simplicity, we denote $X_t = \vecc({X_t^{0:1}}), U_t = \vecc({U_t^{0:1}}), W_t = \vecc({W_t^{0:1}})$, and $S_t = \vecc(X_t^{0:1},U_t^{0:1})$. Then, the system dynamics can be expressed as $X_{t+1} = D(M_t^{0:1}) S_t + W_t$ where

\vspace{-3mm}
\begin{small}
\begin{align}
D(M_t^{0:1}) =
\begin{bmatrix}
A(M_t^{0:1}) & B(M_t^{0:1})
\end{bmatrix}.
\label{D_matrix}
\end{align}
\end{small}
{\color{black}
At each time $t$, the state $X_t^0$ and mode $M_t^0$ of the global plant are observed by both controllers $C^0$ and $C^1$ while the state $X_t^1$ and mode $M_t^1$ of the local plant is only observed by controller $C^1$. Controller $C^1$ can inform controller $C^0$ of the local plant's state and mode through an unreliable link with random packet drops. }
Let $\Gamma_t$ be a Bernoulli random variable describing the state of this link, that is, $\Gamma_t=0$ when the link is broken and otherwise, $\Gamma_t=1$. We denote $p(i) := \prob(\Gamma_t = i)$ for $i \in \{0,1\}$. We assume that $\Gamma_{0:T}$ are independent random variables and they are independent of $X_0^{0:1},W_{0:T}^{0:1},M_{0:T}^{0:1}$. Furthermore, let $(Z_t,\tilde Z_t)$ be the channel output, that is,
 \begin{align}
 Z_t = &
\left\{\begin{array}{ll}
X_t^1 & \text{ when } \Gamma_t = 1,\\
\emptyset & \text{ when } \Gamma_t = 0.
\end{array}\right. \\
\tilde Z_t = &
\left\{\begin{array}{ll}
M_t^1 & \text{ when } \Gamma_t = 1,\\
\emptyset & \text{ when } \Gamma_t = 0.
\end{array}\right.
\label{Model:channel}
\end{align}

{\color{black}
We assume that the channel outputs $Z_t$ and $\tilde Z_t$ are perfectly observed by controller $C^0$.  The successful transmissions of packets is acknowledged by the controller $C^0$ (see, for example, TCP-like protocols \cite{Schenato2007}).
Thus, effectively, $Z_t$ and $\tilde Z_t$ are perfectly observed by controller $C^1$ as well. }
Both controllers select their control actions at time $t$ after observing $Z_t$ and $\tilde Z_t$.  We assume that the links from controllers $C^0$ and $C^1$ to the plants are perfect.

Let $H_t^n$ denote the information available to controller $C^n$, $n \in \{0,1\}$ to make decisions at time $t$. Then,
\begin{align}
H_t^0 &= \{X_{0:t}^0, M_{0:t}^0, Z_{0:t}, \tilde Z_{0:t}, U_{0:t-1}^{0}\}, \notag \\
H_t^1 &=  H_t^0 \cup \{X_{t}^{1}, M_{t}^{1}\}\footnotemark .
\label{Model:info}
\end{align}
\footnotetext{One can follow the argument of \cite[Lemma 1]{Ouyang_Asghari_Nayyar:2016} to see that the results of this paper hold for any $H_t^1 = H_t^0 \cup \hat H_t^1$ where 
$\{X_{t}^{1}, M_{t}^{1}\} \subseteq \hat H_t^1  \subseteq \{X_{0:t}^{1}, M_{0:t}^{1}, U_{0:t-1}^1\}$. For simplicity of presentation, we restrict to $\hat H_t^1 = \{X_{t}^{1}, M_{t}^{1}\}$.}
Let $\mathcal{H}^{n}_t$ be the space of all possible realizations of $H_t^n$, $n \in \{0,1\}$.
Then, $C^{n}$'s actions are selected according to
\begin{align}
U^{n}_t &= \lambda^{n}_t(H^{n}_t), \hspace{2mm}  n \in \{0,1\},
\label{Model:strategy}
\end{align}
where $\lambda^{n}_t:\mathcal{H}^{n}_t \to \R^{d_U^n}$ is a Borel measurable mapping.
The collection of mappings $\lambda_0^n, \ldots, \lambda_T^n$ is called the strategy of controller $C^n$ and is denoted by $\lambda^n$. The collection of both controllers' strategies, $\lambda^{0:1}$, is called the strategy profile.

The instantaneous cost $c_t(X_t^{0:1}, M_t^{0:1}, U_t^{0:1})$ of the system is a quadratic function given by
\begin{align}
c_t(X_t^{0:1}, M_t^{0:1}, U_t^{0:1}) &= 
X_t^\tp Q_t(M_t^{0:1}) X_t + U_t^\tp R_t(M_t^{0:1}) U_t, 
\label{cost_structure}
\end{align}
where $Q_t(m^{0:1})$ is a symmetric positive semi-definite (PSD) matrix and 
$R_t(m^{0:1})$ is a symmetric positive definite (PD) matrix for all $m^0 \in \mathcal{M}^0$ and $m^1 \in \mathcal{M}^1$.

The performance of strategies $\lambda^{0:1}$ is the total expected cost given by

\vspace{-4mm}
\begin{small}
\begin{align}
J(\lambda^{0:1})=\ee^{\lambda^{0:1}}\left[\sum_{t=0}^T c_t(X_t^{0:1}, M_t^{0:1}, U_t^{0:1}) \right].
\label{Model:obj}
\end{align}
\end{small}
Let $\Lambda^0$ and $\Lambda^1$ denote the set of all possible control strategies of $C^0$ and $C^1$, respectively, that ensure all random variables (state and control actions) have finite second moments. The optimal control problem is formally defined below.
\begin{problem}
\label{problem1}
For the system described by \eqref{Model:system_0}-\eqref{Model:obj}, 
we would like to solve the following strategy optimization problem,
\begin{align}
\inf_{\lambda^0\in\Lambda^0, \lambda^1\in \Lambda^1} J(\lambda^{0:1}).
\end{align}
\end{problem}

Problem \ref{problem1} is a two-controller decentralized optimal control problem. However, we cannot a priori assume that linear control strategies are optimal in this problem because 1) the information structure is not partially nested \cite{HoChu:1972}, 2) the system dynamics given in \eqref{Model:system_0} are not linear due to the presence of the local and global modes $M_t^{0:1}$, 3) $X_0^{0:1}$ and $W_{0:T}^{0:1}$ are
not necessarily Gaussian.


\section{Equivalent Problem and Dynamic Program}
\label{sec:equivalent_prob}


According to \eqref{Model:info}, $H_t^0$ is the \emph{common information} among the controllers $C^0$ and $C^1$. Using the common information approach \cite{nayyar2013decentralized}, we formulate a centralized decision-making problem 
which can be used to find optimal strategies in the decentralized Problem \ref{problem1}. In this centralized problem, $C^0$ is the only decision-maker and at each time $t$, it makes two decisions given the realization $h_t^0$:
\begin{enumerate}
\item $C^0$'s control action $u_t^0 = \phi_t^0 (h_t^0)$,
\item A prescription $\rho_t$ for $C^1$ which is a Borel measurable mapping from $\R^{d_X^1} \times \mathcal{M}^1$ to $\R^{d_U^1}$. That is, $\rho_t = \phi_t^{1}(h_t^0)$ where $\rho_t$ belongs to the space $\mathcal{P} =\{\rho: \R^{d_X^1} \times \mathcal{M}^1 \to \R^{d_U^1} \text{ such that $\rho$ is measurable} \}$.
\end{enumerate}
For each realization of $C^1$'s private information $(x_t^1,m_t^1)$, the mapping $\rho_t$ prescribes an action $u_t^1 = \rho_t(x_t^1,m_t^1)$.

We call $u_t^{prs} = (u_t^0, \rho_t)$ the prescription at time $t$. We denote $\phi_{t}^{prs} = (\phi^0_{t}, \phi^{1}_{t})$ and write $u^{prs}_t = \phi_{t}^{prs}(h^0_t)$ 
to indicate that the prescription is a function of the common information $h^0_t$.
The functions $(\phi_{t}^{prs},  \hspace{2pt} 0 \leq t \leq T)$ are collectively referred to as the prescription strategy and denoted by $\phi^{prs}$.
The prescription strategy is required to satisfy the following conditions: 1) $\phi^0 \in \Lambda^0$, and 2) if we define $\varphi_t^{1}(X_t^1, M_t^1, H_t^0):= [\phi_t^{1}(H_t^0)](X_t^1,M_t^1) $, then $\varphi^{1} \in \Lambda^{1}$ where the notation $[\phi_t^{1}(H_t^0)](X_t^1,M_t^1)$ means that we first use $\phi_t^1(H^0_t)$ to find the mapping $\rho_t$ and then evaluate $\rho_t$ at $X_t^1,M_t^1$.

Denote by $\Phi^{prs}$ the set of all prescription strategies satisfying the above conditions. Consider the following optimization problem.

\begin{problem}[Equivalent Centralized Problem]
\label{problem:equivalent}
For the system described by \eqref{Model:system_0}-\eqref{Model:strategy}, we would like to solve the following optimization problem,

\vspace{-4mm}
\begin{small}
\begin{align}
\inf_{\phi^{prs} \in \Phi^{prs}} \ee^{\phi^{prs}} \left[\sum_{t=0}^T c_t^{prs}(X_t^{0:1}, M_t^{0:1}, U^{prs}_t)\right],
\label{Model:optimization_equivalent}
\end{align}
\end{small}
where for any $x_t^{0:1}$, $m_t^{0:1}$, and $u^{prs}_t=(u^0_t,\rho_t)$,
\begin{small}
\begin{align}
c_t^{prs}(x^{0:1}_t, m^{0:1}_t, u^{prs}_t) = c_t\Big(x_t^{0:1}, m_t^{0:1}, u^{0}_t, \rho_t(x_t^1, m_t^1)\Big)  .
\label{cost_of_new_problem}
\end{align}
\end{small}
\end{problem}
Using arguments similar to \cite{Asghari2016optimal}, the solution to Problem \ref{problem:equivalent} can be characterized by a Dynamic Program (DP). The information state for this DP consists of $x_t^0,m_t^0,\tilde z_t$, and the common belief on the state $X_t^1$ defined as follows: Under prescription strategies $\phi^{prs}_{0:t-1} \in \Phi^{prs}$ and for any measurable sets $E \subset \R^{d_X^1}$, 
\begin{align}
\theta_t(E) := \prob^{\phi^{prs}_{0:t-1}}(X_t^1 \in E | h^0_{t}).
\label{eq:thetat_margin}
\end{align}
This belief is sequentially updated for any realization $h^0_t$ according to
\begin{align}
\theta_{t+1} = \psi_t(\theta_t, u_t^{prs}, x_t^0, m_t^0, \tilde z_t, z_{t+1}),
\label{update_theta}
\end{align}
where $u_t^{prs} = (u_t^0, \rho_t)= \phi^{prs}_{t}(h^0_t)$ and $\psi_t$ is a fixed transformation which does not depend on the choice of prescription strategies\footnote{See Appendix \ref{Proof_Lm_belief_update} for the exact description of transformation $\psi_t$.}.

Let $\tilde{\mathcal{M}}^1 = \mathcal{M}^1 \cup \{\emptyset\}$. Then, the following theorem provides a DP characterization for the solution of Problem~\ref{problem:equivalent}. 
\begin{theorem}
\label{thm:general_structure}
Suppose there exist functions $\{ V_{t}: \R^{d_X^0} \times  \mathcal{M}^0 \times \Delta(\R^{d_X^1}) 
\times \tilde{\mathcal{M}}^1 \to \R \text{ for }t=0,1,\dots,T+1$ such that for each $x_t^0 \in \R^{d_X^0}$, $m_t^0 \in \mathcal{M}^0$, $\theta_t \in \Delta(\R^{d_X^1})$, and $\tilde z_t \in \tilde{\mathcal{M}}^1$, the following are true:
\begin{itemize}
\item $V_{T+1}(x_{T+1}^0, m_{T+1}^0, \theta_{T+1}, \tilde z_{T+1})= 0$,
\item For any $t = 0,1,\dots,T$
\begin{small}
\begin{align}
& \hspace{-4mm} V_t(x_{t}^0, m_{t}^0, \theta_{t}, \tilde z_{t}) =   
 \min
 \Big\lbrace
\ee \Big[ c_t^{prs}(x_t^0, X_t^{1}, m_t^0, M_t^{1}, u^{prs}_t) \notag \\
&+ V_{t+1}\big(X_{t+1}^0, M_{t+1}^0, \psi_t(\theta_t, u_t^{prs}, x_t^0, m_t^0, \tilde z_t, Z_{t+1})
, \tilde Z_{t+1}\big) \notag \\
&\hspace{40mm}  \Big| x_t^0, m_t^0, \theta_t, \tilde z_t, u_t^{prs} \Big]
 \Big \rbrace,
\label{eq:DP_V}
\end{align}
\end{small}
where $X_t^1$ is a random vector distributed according to $\theta_t$, $u^{prs}_t=(u^{0}_t, \rho_t)$ and the minimization is over $u^0_t \in \R^{d_U^0}, \rho_t \in \mathcal{P}$.
\end{itemize}
Further, suppose there exists a feasible prescription strategy
$\phi^{prs*}\in \Phi^{prs}$
such that for any realization $h^0_t\in\mathcal{H}^0_t$ and its corresponding common beliefs $\theta_t$ (defined in \eqref{eq:thetat_margin}-\eqref{update_theta}), the prescription $u^{prs*}_t=(u^{0*}_t, \rho_t^*)= \phi^{prs*}(h^0_t)$
achieves the minimum in the definition of $V_t(x_{t}^0, m_{t}^0, \theta_{t}, \tilde z_{t})$. Then, $\phi^{prs*}$ is an optimal prescription strategy for Problem \ref{problem:equivalent}.
\end{theorem}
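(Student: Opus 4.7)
The plan is to treat this as a standard verification argument for a DP over the controlled information state $(X_t^0, M_t^0, \theta_t, \tilde Z_t)$, proceeding by backward induction on $t$. First I would establish that, under any prescription strategy $\phi^{prs}$, the tuple $(X_t^0, M_t^0, \theta_t, \tilde Z_t)$ is a controlled Markov process whose transition law depends only on $(x_t^0, m_t^0, \theta_t, \tilde z_t)$ and the prescription $u_t^{prs}$, not on the past common information $h_t^0$ beyond these sufficient statistics. The key ingredient here is that $\theta_t$ is, by construction in \eqref{eq:thetat_margin}, the common conditional distribution of $X_t^1$ given $h_t^0$, and its update rule \eqref{update_theta} is a fixed transformation that does not depend on $\phi^{prs}_{0:t-1}$. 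Combined with the dynamics in \eqref{Model:system_0}, the independence of $(M_t^{0:1},W_t^{0:1},\Gamma_t)$ from the past, and the channel relations for $(Z_{t+1},\tilde Z_{t+1})$, this yields the Markov property.

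Second, I would show that the conditional expected instantaneous cost reduces to a function of $(x_t^0, m_t^0, \theta_t, \tilde z_t)$ and $u_t^{prs}$. Writing $u_t^1 = \rho_t(X_t^1, M_t^1)$ and using \eqref{cost_of_new_problem}, the expectation
\[
\ee\bigl[c_t^{prs}(x_t^0, X_t^1, m_t^0, M_t^1, u_t^{prs}) \,\big|\, h_t^0, u_t^{prs}\bigr]
\]
depends on $h_t^0$ only through the conditional distribution of $(X_t^1, M_t^1)$ given $h_t^0$, which in turn is determined by $\theta_t$ and $\tilde z_t$ (since either $\tilde z_t = M_t^1$ or, when $\tilde z_t = \emptyset$, the conditional distribution of $M_t^1$ given $h_t^0$ is a fixed function of the history that factorizes appropriately against $\theta_t$). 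The same holds for the expected future cost once the Markov property above is in place.

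Third, I would run the backward induction. For each feasible $\phi^{prs} \in \Phi^{prs}$, define
\[
J_t(\phi^{prs}; h_t^0) = \ee^{\phi^{prs}}\!\left[\sum_{s=t}^{T} c_s^{prs}(X_s^{0:1}, M_s^{0:1}, U_s^{prs}) \,\Big|\, h_t^0\right],
\]
and show by induction that $J_t(\phi^{prs}; h_t^0) \geq V_t(x_t^0, m_t^0, \theta_t, \tilde z_t)$, with equality if $\phi^{prs}$ matches the minimizer at each stage from $t$ onward. The base case at $t = T+1$ is trivial from $V_{T+1}\equiv 0$. The inductive step uses the tower property of conditional expectation, the Markov reduction from step one, the cost reduction from step two, and the definition of $V_t$ as a pointwise minimum to lower bound the sum. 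Taking $t=0$ and averaging over the initial common information gives $J(\phi^{prs}) \geq \ee[V_0(X_0^0, M_0^0, \theta_0, \tilde Z_0)]$, with equality attained by the hypothesized minimizer $\phi^{prs*}$, proving its optimality.

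The main obstacle I anticipate is not the induction itself but the careful verification that $\psi_t$ in \eqref{update_theta} is well-defined and measurable as a fixed transformation, and that the minimization in \eqref{eq:DP_V} over the infinite-dimensional prescription space $\mathcal{P}$ yields a Borel measurable selector $\rho_t^*$ compatible with the feasibility requirement $\varphi^1 \in \Lambda^1$ (finite second moments). This measurability/integrability bookkeeping, together with justifying the exchange of $\inf$ and $\ee$ needed to reduce the strategy optimization to a stagewise one, is where the work lies; the rest is a direct appeal to the arguments already developed in \cite{Asghari2016optimal} and \cite{nayyar2013decentralized}.
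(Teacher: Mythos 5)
Your proposal is correct and follows essentially the same route as the paper's proof: a backward-induction verification argument showing that the cost-to-go from any $h_t^0$ is bounded below by $V_t(x_t^0,m_t^0,\theta_t,\tilde z_t)$ for arbitrary strategies and equals it when the minimizer is used from $t$ onward, relying on the tower property, the strategy-independence of the belief update $\psi_t$, and the conditional-independence structure of $(X_t^0,X_t^1,M_t^{0:1})$ given the common information (which is how the paper realizes your ``controlled Markov process'' step). Note that the measurable-selection concern you flag at the end is already dispatched by the theorem's hypothesis that a feasible minimizing $\phi^{prs*}$ exists, so the only measurability the paper actually verifies in the induction is that $V_t$ is measurable as a function of $h_t^0$.
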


\begin{proof}
See Appendix \ref{Proof_Thm_general_structure} for a proof.
\end{proof}

\section{Modified Dynamic Program based on Strategy Decomposition}
\label{sec:decomposition_DP}

In this section, we first introduce a decomposition for the control strategies of the controller $C^1$. Then, we modify the DP of Theorem \ref{thm:general_structure} based on this decomposition. As we will show in Section \ref{sec:solution}, this decomposition helps us find a solution to the DP of Theorem \ref{thm:general_structure}, and provide optimal strategies for the controllers.

\subsection{Decomposition of controller $C^1$'s strategies}
\label{section:decomposition_local_strategies}
Consider arbitrary prescription strategies $\phi^{prs} \in \Phi^{prs}$ of Problem \ref{problem:equivalent}. Under these strategies, $U_t^1$ can be decomposed as,

\vspace{-4mm}
\begin{small}
\begin{align}
&U_t^{1} = [\phi_t^1(H_t^0)](X_t^1,M_t^1) \notag \\
&= \underbrace{\ee^{\phi_{0:t-1}^{prs}} [U_t^{1} \vert H_t^0,M_t^1]}_{ [\bar \phi_t^{1}(H_t^0)](M_t^1)}
+ \underbrace{\{ U_t^{1} - \ee^{\phi_{0:t-1}^{prs}} [ U_t^{1} \vert H_t^0,M_t^1 ]  \}}_{[\tilde \phi_t^{1}(H_t^0)](X_t^1,M_t^1)}.
\label{decomposition}
\end{align}
\end{small}
Note that for any realization $h_t^0$ of $H_t^0$, $\bar \phi_t^{1}(h_t^0)$ is a measurable mapping from $\mathcal{M}^1$ to $\R^{d_U^1}$ and $\tilde \phi_t^{1}(h_t^0)$ is a measurable mapping from $\R^{d_X^1} \times \mathcal{M}^1$ to $\R^{d_U^1}$. Furthermore, for any realization $h_t^0$ of $H_t^0$ and $m_t^1$ of $M_t^1$, $[\tilde \phi_t^{1}(H_t^0)](X_t^1,M_t^1)$ is conditionally zero-mean given $h_t^0,m_t^1$, that is, $\ee^{\phi_{0:t-1}^{prs}} \big[[\tilde \phi_t^{1}(H_t^0)](X_t^1,M_t^1)  \vert h_t^0,m_t^1\big] =0$. Also note that $X_t^1$ and $M_t^1$ are conditionally independent given the common information $h_t^0$. Hence, we have 

\vspace{-2mm}
\begin{small}
\begin{align}
&\ee^{\phi_{0:t-1}^{prs}} \big[[\tilde \phi_t^{1}(H_t^0)](X_t^1,M_t^1)  \vert h_t^0,m_t^1\big] \notag \\
&= \ee^{\phi_{0:t-1}^{prs}} \big[[\tilde \phi_t^{1}(h_t^0)](X_t^1,m_t^1)  \vert h_t^0\big] 
=  \int [\tilde \phi_t^{1}(h_t^0)](x_t^1,m_t^1) \theta_t(dx_t^1).
\end{align}
\end{small}
Now, if we define $\bar  q_t = \bar \phi_t^1(h_t^0)$, $\tilde q_t = \tilde \phi_t^1(h_t^0)$, and 
\begin{small}
\begin{align}
\bar{\mathcal{Q}}&=
\{ \bar q_t: \mathcal{M}^1 \to \R^{d_U^1} \text{ measurable}\}, \notag \\
\mathcal{\tilde Q}(\theta_t) &=
\{ \tilde q_t: \R^{d_X^1} \times \mathcal{M}^1 \to \R^{d_U^1} \text{ measurable}, \notag \\
&\hspace{4mm} \int \tilde q_t(x_t^1,m_t^1) \theta_t(dx_t^1) =0 \quad \text{ for all $m_t^1 \in \mathcal{M}^1$}
\},
\label{Q_set}
\end{align}
\end{small}
then we have  $\bar q_t \in \bar{\mathcal{Q}}$ and $\tilde q_t \in \tilde{\mathcal{Q}}(\theta_t)$. This together with the representation $U_t^1$ of \eqref{decomposition} suggests that at each time $t$, for any $h^0_t\in\mathcal{H}^0_t$ and its corresponding common beliefs $\theta_t$, finding a function $\rho_t \in \mathcal{P}$ is equivalent to finding two functions $\bar q_t \in \bar{\mathcal{Q}}$ and $\tilde q_t \in \tilde{\mathcal{Q}}(\theta_t)$. The following lemma states this point formally.
\begin{lemma}
\label{decomposition_equivalence}
In Problem \ref{problem:equivalent}, for any $h^0_t\in\mathcal{H}^0_t$, let $\theta_t$ be its corresponding common belief defined by \eqref{eq:thetat_margin}-\eqref{update_theta}. Then,
\begin{align}
\mathcal{P} = \{\bar q_t \circ h_2 + \tilde q_t:  \bar q_t \in \bar{\mathcal{Q}}, \tilde q_t \in \tilde{\mathcal{Q}}(\theta_t)  \}
\label{set_equivalence}
\end{align}
where $h_2: \R^{d_X^1} \times \mathcal{M}^1 \to  \mathcal{M}^1$ is a projection function defined as $h_2(x^1,m^1) = m^1$. 
\end{lemma}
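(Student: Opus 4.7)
The plan is to prove the two set inclusions in \eqref{set_equivalence} separately. Both sides are collections of measurable maps $\R^{d_X^1} \times \mathcal{M}^1 \to \R^{d_U^1}$, so the task reduces to showing every $\rho_t \in \mathcal{P}$ admits a decomposition of the form $\bar q_t \circ h_2 + \tilde q_t$ with the required structure, and conversely that every such sum lies in $\mathcal{P}$.

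The easy inclusion ($\supseteq$) is essentially immediate: if $\bar q_t \in \bar{\mathcal{Q}}$ and $\tilde q_t \in \tilde{\mathcal{Q}}(\theta_t)$, then $\bar q_t \circ h_2$ (the function $(x^1,m^1) \mapsto \bar q_t(m^1)$) is measurable because $\bar q_t$ and the projection $h_2$ are measurable, and so the pointwise sum $\bar q_t \circ h_2 + \tilde q_t$ is a measurable map from $\R^{d_X^1}\times\mathcal{M}^1$ into $\R^{d_U^1}$, i.e.\ an element of $\mathcal{P}$.

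For the nontrivial inclusion ($\subseteq$), given $\rho_t \in \mathcal{P}$ arising from a feasible prescription strategy (so that the relevant integrals are finite), define
\begin{align*}
\bar q_t(m^1) &:= \int \rho_t(x^1, m^1)\, \theta_t(dx^1), \\
\tilde q_t(x^1, m^1) &:= \rho_t(x^1, m^1) - \bar q_t(m^1).
\end{align*}
Then I would verify three things. First, $\bar q_t : \mathcal{M}^1 \to \R^{d_U^1}$ is automatically measurable since $\mathcal{M}^1$ is finite, so $\bar q_t \in \bar{\mathcal{Q}}$. Second, $\tilde q_t$ is measurable as the difference of two measurable maps, and for every $m^1 \in \mathcal{M}^1$,
\begin{equation*}
\int \tilde q_t(x^1, m^1)\, \theta_t(dx^1) = \int \rho_t(x^1, m^1)\, \theta_t(dx^1) - \bar q_t(m^1) = 0,
\end{equation*}
so $\tilde q_t \in \tilde{\mathcal{Q}}(\theta_t)$. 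Third, by construction $\bar q_t(h_2(x^1,m^1)) + \tilde q_t(x^1,m^1) = \rho_t(x^1,m^1)$, which gives the required representation.

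The only subtle point, and the one I would be most careful about, is the well-definedness of the integral defining $\bar q_t$. This requires $\rho_t(\cdot, m^1)$ to be $\theta_t$-integrable for every $m^1$, which is not guaranteed by the raw definition of $\mathcal{P}$ but does hold for any prescription that can actually arise from a feasible strategy profile (where second moments of $U_t^1$ are finite), so the decomposition \eqref{decomposition} motivating the lemma is valid. I would state this integrability as the working hypothesis and note that once it holds, the construction above gives the decomposition immediately without any further analytical work.
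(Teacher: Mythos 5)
Your proof is correct and follows essentially the same route as the paper's: the identical decomposition $\bar q_t(\cdot) = \int \rho_t(x^1,\cdot)\,\theta_t(dx^1)$, $\tilde q_t = \rho_t - \bar q_t \circ h_2$ for the forward inclusion, and measurability of the sum for the reverse. Your added caveat about $\theta_t$-integrability of $\rho_t(\cdot,m^1)$ is a legitimate observation that the paper's own proof silently assumes, so it strengthens rather than changes the argument.
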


\begin{proof}
See Appendix \ref{Proof_lemma_decomposition} for a proof.
\end{proof}

According to Lemma \ref{decomposition_equivalence}, any prescription strategy $\phi^{prs}= (\phi^{0}, \phi^{1}) \in \Phi^{prs}$ can be equivalently represented by $(\phi^{0}, \bar \phi^{1}, \tilde \phi^{1})$. In the new representation, the control action $u_t^1$ of $C^1$ applied to the system described by \eqref{Model:system_0}-\eqref{Model:channel} is 
$u_t^{1} = \bar q_t(M_t^1) + \tilde q_t(X_t^1,M_t^1)$ where $\bar  q_t = \bar \phi_t^1(h_t^0)$ and $\tilde q_t = \tilde \phi_t^1(h_t^0)$. In the following, we use the equivalent representation $(\phi^{0}, \bar \phi^{1}, \tilde \phi^{1})$ for any prescription strategy $\phi^{prs} \in \Phi^{prs}$.



\subsection{Modified Dynamic Program for Problem \ref{problem:equivalent}}

The following theorem provides a modified dynamic program for optimal prescription strategies of Problem \ref{problem:equivalent} based on the new representation of strategies described above.

\begin{theorem}
\label{thm:structure}
Theorem \ref{thm:general_structure} holds if $u^{prs}_t=(u^{0}_t, \bar q_t \circ h_2 + \tilde q_t)$ and the minimization in \eqref{eq:DP_V} is over $u^0_t \in \R^{d_U^0}, \bar q_t \in \bar{\mathcal{Q}}, \tilde q_t \in \tilde{\mathcal{Q}}(\theta_t) $.
\end{theorem}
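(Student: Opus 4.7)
The plan is to show that Theorem~\ref{thm:structure} follows from Theorem~\ref{thm:general_structure} by a direct reparameterization of the inner minimization in the Bellman equation \eqref{eq:DP_V}, using Lemma~\ref{decomposition_equivalence} to justify the change of variables.

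First, I would observe that in \eqref{eq:DP_V} the prescription $u_t^{prs} = (u_t^0, \rho_t)$ enters the right-hand side only through the cost $c_t^{prs}$ (via \eqref{cost_of_new_problem}) and through the belief update $\psi_t$ (via \eqref{update_theta}). The second component $\rho_t$ appears only as a measurable mapping $\R^{d_X^1} \times \mathcal{M}^1 \to \R^{d_U^1}$, and the minimization is performed pointwise for each fixed $\theta_t$, with $\rho_t$ ranging over $\mathcal{P}$.

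Next, I would invoke Lemma~\ref{decomposition_equivalence} to rewrite the feasible set as
\[
\mathcal{P} = \{ \bar q_t \circ h_2 + \tilde q_t : \bar q_t \in \bar{\mathcal{Q}},\ \tilde q_t \in \tilde{\mathcal{Q}}(\theta_t) \}.
\]
Because this is an equality of sets (not merely an inclusion), the infimum of any functional of $\rho_t$ over $\mathcal{P}$ equals the infimum of the same functional, with $\rho_t$ replaced by $\bar q_t \circ h_2 + \tilde q_t$, taken over the pair $(\bar q_t, \tilde q_t) \in \bar{\mathcal{Q}} \times \tilde{\mathcal{Q}}(\theta_t)$. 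Substituting this into \eqref{eq:DP_V} produces exactly the modified DP asserted in Theorem~\ref{thm:structure}. The clause of Theorem~\ref{thm:general_structure} concerning the attainment of the minimum transfers identically: any $\rho_t^*$ achieving the minimum corresponds bijectively to a pair $(\bar q_t^*, \tilde q_t^*)$ achieving the minimum in the modified form, so the resulting optimal prescription strategy $\phi^{prs*}$ is unchanged.

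Finally, I would verify that the feasibility and second-moment conditions defining $\Phi^{prs}$ are preserved under the change of variables. By the decomposition \eqref{decomposition}, any $\phi^{prs} = (\phi^0, \phi^1) \in \Phi^{prs}$ induces a triple $(\phi^0, \bar\phi^1, \tilde\phi^1)$ that yields the same random action $U_t^1$, and conversely; hence the classes of admissible strategies under the two representations coincide. The main point requiring care, though not a genuine obstacle, is that the constraint set $\tilde{\mathcal{Q}}(\theta_t)$ depends on $\theta_t$; but since the DP already minimizes pointwise in the information state $(x_t^0, m_t^0, \theta_t, \tilde z_t)$, this $\theta_t$-dependence is absorbed naturally into the Bellman recursion.
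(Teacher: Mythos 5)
Your proposal is correct and follows essentially the same route as the paper, which proves Theorem~\ref{thm:structure} simply by applying the set equality of Lemma~\ref{decomposition_equivalence} to reparameterize the minimization domain in \eqref{eq:DP_V}; your write-up merely makes explicit the details (pointwise substitution in the Bellman equation, transfer of the attainment clause, and preservation of feasibility via \eqref{decomposition}) that the paper leaves implicit.
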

\begin{proof}
The proof is obtained by applying Lemma \ref{decomposition_equivalence} to Theorem  \ref{thm:general_structure}.
\end{proof} 
Note that although each step of the DP of Theorem \ref{thm:structure} involves a functional optimization similar to Theorem \ref{thm:general_structure}, in the next section, we show that the functions satisfying \eqref{eq:DP_V} exist and 
it is possible to find a solution to the DP of Theorem \ref{thm:structure}. This solution can then be used to provide optimal strategies for the controllers.

\begin{remark}
The decomposition of the controller $C^1$'s strategies proposed in \eqref{decomposition}
is different from the one in \cite{Ouyang_Asghari_Nayyar:2016,Asghari2016optimal} where $U_t^1$ is decomposed into two terms: the conditional mean of $U_t^1$ given the common information $H_t^0$ and the deviation of $U_t^1$ from the mean. One can follow the decomposition of \cite{Ouyang_Asghari_Nayyar:2016,Asghari2016optimal} here and observe that it fails to provide any structure for solving the DP of Theorem \ref{thm:structure}. 
\end{remark}

\section{Optimal Control Strategies}
\label{sec:solution}
In this section, we identify the structure of the value function in the modified  dynamic program described in Section \ref{sec:decomposition_DP}.
Using the structure, we explicitly solve the dynamic program and obtain the optimal strategies for Problem \ref{problem1}.

\begin{theorem}
\label{thm:Sol_mode}
For any $x_t^0 \in \R^{d_X^0}$, $m_t^0 \in \mathcal{M}^0$, $\theta_t \in \Delta(\R^{d_X^1})$, and $\tilde z_t \in \tilde{\mathcal{M}}^1$ at time $t$, {\color{black}there} exist positive semi-definite matrices $P_t(m_t^0,\tilde z_t)$ and $\tilde{P}_t(m_t^0,\tilde z_t)$ such that the value function of the dynamic program \eqref{eq:DP_V} is given by  

\vspace{-4mm}
\begin{small}
\begin{align}
V_t(x_{t}^0, m_{t}^0, \theta_{t}, \tilde z_{t}) &= 
QF \Big(P_t(m_t^0,\tilde z_t), \vecc \big(x^0_t,\mu(\theta_t) \big) \Big) \notag \\
& + \tr \Big( \tilde{P}_t(m_t^0,\tilde z_t) \cov(\theta_t) \Big) + e_t,
\label{eq:Vt_mode}
\end{align}
\end{small}
where $e_t$ is a function of statistics of $W_{t+1:T}^{0:1}$ and matrices $P_{t+1:T}, \tilde P_{t+1:T}$.

Furthermore, for any $x_t^0 \in \R^{d_X^0}$, $m_t^0 \in \mathcal{M}^0$, $\theta_t \in \Delta(\R^{d_X^1})$, and $\tilde z_t \in \tilde{\mathcal{M}}^1$ at time $t$, there exist matrices $K_t(m_t^0,\tilde z_t)$ and $\tilde K_t(m_t^0,\tilde z_t)$ such that 
the minimizing $u_t^0, \bar q_t, \tilde q_t$ are
\begin{itemize}
\item If $\tilde z_t = \emptyset$:
\begin{small}
\begin{align}
&\begin{bmatrix}
u^{0*}_t \\
\bar q_t^*(1) \\
\vdots \\
\bar q_t^*(\kappa^1)
\end{bmatrix}
=K_t(m_t^0, \emptyset)
\begin{bmatrix}
x_t^0 \\
\mu(\theta_t)
\end{bmatrix},
\label{eq:opt_ubar_mode_empty} \\
& \hspace{-4mm} \tilde q^{*}_t(x_t^1,m_t^1) = \tilde K_t(m_t^0,m_t^1)
\big(x_t^1 - \mu(\theta_t)  \big)
\notag \\
& \hspace{30mm} \forall x_t^1 \in \R^{d_X^1}, \forall m_t^1 \in \mathcal{M}^1.
\label{eq:opt_gamma_mode_empty}
\end{align}
\end{small}
\item If $\tilde z_t = l$ for any $l \in \mathcal{M}^1$:

\begin{small}
\begin{align}
&\begin{bmatrix}
u^{0*}_t \\
\bar q_t^*(l)
\end{bmatrix}
=K_t(m_t^0, l)
\begin{bmatrix}
x_t^0 \\
\mu(\theta_t)
\end{bmatrix},
\label{eq:opt_ubar_mode} \\
& \bar q_t^{*}(m_t^1) = 0, \quad \forall m_t^1 \in \mathcal{M}^1 \setminus \{ l \}, 
\label{eq:opt_qbar_mode_l}
\\
&\tilde q^{*}_t(x_t^1,m_t^1) = 0, \quad  \forall x_t^1 \in \R^{d_X^1}, \forall m_t^1 \in \mathcal{M}^1.
\label{eq:opt_gamma_mode}
\end{align}
\end{small}
\end{itemize}
\end{theorem}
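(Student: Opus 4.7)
The plan is to prove Theorem \ref{thm:Sol_mode} by backward induction on $t$ applied to the modified dynamic program of Theorem \ref{thm:structure}. The base case $t = T+1$ is immediate with $P_{T+1} = 0$, $\tilde P_{T+1} = 0$, and $e_{T+1} = 0$. For the inductive step I assume \eqref{eq:Vt_mode} holds at $t+1$, substitute into \eqref{eq:DP_V}, and use the dynamics $X_{t+1} = D(M_t^{0:1}) S_t + W_t$ together with the belief update $\theta_{t+1} = \psi_t(\theta_t, u_t^{prs}, x_t^0, m_t^0, \tilde z_t, Z_{t+1})$. The aim is to rewrite the sum of the instantaneous cost $c_t^{prs}$ and the future cost as a quadratic functional in the decision variables $(u_t^0, \bar q_t, \tilde q_t)$, perform the minimization, and verify that the minimum preserves the quadratic-plus-trace form.

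The central idea is to exploit the orthogonality of the decomposition $u_t^1 = \bar q_t(M_t^1) + \tilde q_t(X_t^1, M_t^1)$ from Section \ref{section:decomposition_local_strategies}. Because $\tilde q_t \in \tilde{\mathcal{Q}}(\theta_t)$ is conditionally zero-mean in $X_t^1$ given $M_t^1$ under $\theta_t$, and $X_t^1$ is independent of $M_t^1$ given common information, the expected total cost should split into (i) a finite-dimensional quadratic form in $\vecc(x_t^0, \mu(\theta_t), u_t^0, \bar q_t(1), \ldots, \bar q_t(\kappa^1))$, and (ii) an expectation of a pointwise-in-$m_t^1$ quadratic in $(X_t^1 - \mu(\theta_t), \tilde q_t(X_t^1, M_t^1))$. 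Completing the square in (i) yields the linear gain $K_t(m_t^0, \tilde z_t)$ and the matrix $P_t(m_t^0, \tilde z_t)$ via a Riccati-type identity, while the pointwise minimization in (ii) produces a minimizer linear in $(x_t^1 - \mu(\theta_t))$, giving $\tilde K_t(m_t^0, m_t^1)$ and a trace contribution $\tr(\tilde P_t(m_t^0, \tilde z_t) \cov(\theta_t))$.

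The two cases on $\tilde z_t$ emerge naturally from the structure of $\theta_t$. When $\tilde z_t = l \neq \emptyset$, the successful transmission makes both $X_t^1$ and $M_t^1 = l$ part of common information, so $\theta_t = \delta_{x_t^1}$, $\mu(\theta_t) = x_t^1$, and $\cov(\theta_t) = 0$. The constraint $\int \tilde q_t(x, m) \theta_t(dx) = 0$ then forces $\tilde q_t^*(x_t^1, m) = 0$, and since $M_t^1 = l$ is known, the values $\bar q_t^*(m)$ for $m \neq l$ do not affect the realized cost and can be set to zero, reducing the joint optimization to one over $(u_t^0, \bar q_t(l))$ alone; this matches \eqref{eq:opt_ubar_mode}-\eqref{eq:opt_gamma_mode}. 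When $\tilde z_t = \emptyset$, all $\kappa^1$ values of $\bar q_t(\cdot)$ and the full function $\tilde q_t(\cdot, \cdot)$ enter the optimization jointly, yielding \eqref{eq:opt_ubar_mode_empty}-\eqref{eq:opt_gamma_mode_empty}.

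The principal obstacle will be propagating the value function structure through the expectation over $(M_{t+1}^0, Z_{t+1}, \tilde Z_{t+1})$ and the belief update $\psi_t$. I expect the key technical lemma to be that $\mu(\theta_{t+1})$ is an affine function of $\mu(\theta_t), x_t^0, u_t^0$, and $\bar q_t$, and that $\cov(\theta_{t+1})$ is a deterministic function of $\cov(\theta_t), \Gamma_{t+1}, m_t^0$, and $\tilde z_t$ that is independent of the realized private information. Once this is in hand, the trace term averages cleanly over $\tilde Z_{t+1}$ into a trace term at time $t$ plus a constant absorbed into $e_t$, and tracking the block-triangular structure of $D(M_t^{0:1})$ across the two cases on $\tilde z_t$ becomes bookkeeping. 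The induction then closes by defining $P_t(m_t^0, \tilde z_t)$ and $\tilde P_t(m_t^0, \tilde z_t)$ as the Riccati expressions produced by the completion of squares averaged over $\tilde Z_{t+1}$.
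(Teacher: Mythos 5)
Your skeleton matches the paper's proof: backward induction with the trivial base case at $T+1$, splitting the stage-plus-continuation cost into a finite-dimensional quadratic in $\vecc(x_t^0,\mu(\theta_t),u_t^0,\bar q_t(1),\dots,\bar q_t(\kappa^1))$ plus mode-averaged traces of covariances of the deviation variables, completing the square in each piece, and the case split on $\tilde z_t$ (including the observation that for $\tilde z_t=l$ the cost is insensitive to $\tilde q_t$ and to $\bar q_t(m)$, $m\neq l$). The gap is in the step you yourself flag as the principal obstacle: your proposed key lemma that $\cov(\theta_{t+1})$ is a deterministic function of $\cov(\theta_t),\Gamma_{t+1},m_t^0,\tilde z_t$ alone, so that the continuation trace term "averages cleanly into a trace term at time $t$ plus a constant absorbed into $e_t$." That lemma is false, and the induction does not close if you use it.

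Concretely, when $z_{t+1}=\emptyset$ the updated belief is the law of $[D(m_t^{0:1})]_{1\bullet}\vecc\big(x_t^{0:1},u_t^0,\bar q_t(m_t^1)+\tilde q_t(X_t^1,m_t^1)\big)+W_t^1$ with $X_t^1\sim\theta_t$ and $M_t^1$ drawn from $\omega^{\tilde z_t}$, and its covariance fails your claim in two ways. First, it contains $D^{11}(m_t^{0:1})\cov\big(\vecc(X^{\theta_t},\tilde q_t(X^{\theta_t},m_t^1))\big)D^{11}(m_t^{0:1})^{\tp}$, i.e., it depends on the choice of $\tilde q_t$; this is why $\tilde H_t$ in \eqref{eq:recursion_tilde_H_t} carries the continuation term $D^{11\tp}\Psi(\tilde P_{t+1},P_{t+1}^{11})D^{11}$ and why $\tilde K_t$ is not merely the minimizer of the instantaneous cost. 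Second, and this is the switched-mode-specific phenomenon the paper's remark after Theorem \ref{thm:structure} is about: since $M_t^1$ is unknown to $C^0$ when $\tilde z_t=\emptyset$, the law of total variance gives $\cov(\theta_{t+1})=\ee[\cov(X_{t+1}^1\mid M_t^1)]+\cov(\ee[X_{t+1}^1\mid M_t^1])$, and the second summand is a quadratic form in $\vecc(x_t^0,\mu(\theta_t),u_t^0,\bar q_t(1),\dots,\bar q_t(\kappa^1))$ — the matrix $F_t^{\emptyset}$ of \eqref{eq:recursion_F_t}. Part of the "covariance" contribution to $V_{t+1}$ therefore migrates into the quadratic form in the mean variables and couples $u_t^0$ with the $\bar q_t(m)$ across modes; dropping it gives the wrong $H_t(m^0,\emptyset)$ and hence the wrong $K_t$ and $P_t$. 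You need to compute the mean and covariance of the updated beliefs explicitly, as in \eqref{mean_alpha_0}--\eqref{cov_alpha_1_2}, rather than assume decision-independent covariance propagation.
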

\begin{proof}
See Appendix \ref{Proof_theorem_Sol_mode} for a proof.
\end{proof}
The matrices $P_t$ and $\tilde P_t$ can be explicitly computed using coupled "Riccati-like" backward recursions, and the gain matrices $K_t$ and $\tilde K_t$ are simple functions of $P_t$ and $\tilde P_t$. The computation of these matrices can be done offline with computational complexity similar to that of an optimal centralized LQR controller. See Appendix \ref{Recursions} for the recursions for $P_t$ and $\tilde P_t$ and the equations for $K_t$ and $\tilde K_t$

%
From Theorem \ref{thm:Sol_mode}, we can explicitly compute the optimal strategies for Problem \ref{problem1}. The optimal strategies of controllers $C^{0}$ and $C^{1}$ are given in the following theorem.

\begin{theorem}
\label{thm:opt_strategies}
For any $x_t^0 \in \R^{d_X^0}$ and $m_t^0 \in \mathcal{M}^0$, the optimal strategies of Problem \ref{problem1} are given by
$u^{0*}_t$ and $u^{1*}_t = \rho_t^{*}(x_t^1,m_t^1)  = \bar q_t^{*} (m_t^1 ) + \tilde q_t^{*} (x_t^1, m_t^1)$ where (i) $u^{0*}_t,  \bar q_t^{*},  \tilde q_t^{*}$ are described by \eqref{eq:opt_ubar_mode_empty}-\eqref{eq:opt_gamma_mode} and (ii)
$\mu(\theta_t) = \hat x_t^1$ is the estimate (conditional expectation) of $X_t^1$ based on the common information $h^0_t$ and it can be computed recursively according to

\vspace{-4mm}
\begin{small}
\begin{align}
&\hat x_0^1=\left\{
\begin{array}{ll}
\mu(\pi_{X_0^1}) & \text{ if }z_{0}= \emptyset,\\
 x_{0}^1 & \text{ if }z_{0} = x_{0}^1.
\end{array}\right.
\label{eq:estimator_0}
\end{align}
\end{small}
\vspace{-2mm}
\begin{small}
\begin{align}
&\hat x_{t+1}^1 =
\notag\\
&\left\{
\begin{array}{ll}
\hspace{-3mm}
\sum\limits_{m_t^1 \in \mathcal{M}^1} \pi_{M^1}(m_t^1)
[D(m_t^{0:1})]_{1\bullet} \begin{bmatrix} x_t^0 \\ \hat x_t^1\\ \bar u^{0*}_t \\ \bar q_t^{*}(m_t^1) \end{bmatrix} 
  &\hspace{-0mm} \text{if }z_{t+1}= \tilde z_t =\emptyset, \\
\ [D(m_t^{0:1})]_{1\bullet}  \begin{bmatrix} x_t^0 \\ \hat x_t^1\\ \bar u^{0*}_t \\ \bar q_t^{*}(m_t^1) \end{bmatrix} 
&  \hspace{-4mm} \text{if }z_{t+1}= \emptyset, \tilde z_t = m_t^1, \\
 \hspace{0mm} x_{t+1}^1 & \hspace{-4mm}  \text{if }z_{t+1} = x_{t+1}^1,\\
\end{array}\right.
\label{eq:estimator_t}
\end{align}
\end{small}
\end{theorem}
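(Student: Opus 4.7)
The plan is to derive Theorem~\ref{thm:opt_strategies} from Theorem~\ref{thm:Sol_mode} together with the equivalence between Problem~\ref{problem1} and Problem~\ref{problem:equivalent} underlying the common information approach. Given any optimal prescription strategy $\phi^{prs*} = (\phi^{0*}, \bar\phi^{1*}, \tilde\phi^{1*}) \in \Phi^{prs}$ of Problem~\ref{problem:equivalent}, the induced decentralized strategies $\lambda^{0*}_t(h_t^0) = \phi^{0*}_t(h_t^0)$ and $\lambda^{1*}_t(h_t^0, x_t^1, m_t^1) = [\bar\phi^{1*}_t(h_t^0)](m_t^1) + [\tilde\phi^{1*}_t(h_t^0)](x_t^1, m_t^1)$ are optimal for Problem~\ref{problem1}. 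Plugging the closed-form minimizers from Theorem~\ref{thm:Sol_mode} into this translation gives part~(i), provided the quantity $\mu(\theta_t)$ appearing in those minimizers can be evaluated from the common information.

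For part~(ii), I would first observe that by the definition~\eqref{eq:thetat_margin} of $\theta_t$ as the common belief on $X_t^1$, its mean is exactly $\mu(\theta_t) = \ee[X_t^1 \,|\, H_t^0]$, which is the quantity $\hat x_t^1$ in the theorem statement and is a function of the common information. The base case~\eqref{eq:estimator_0} then follows because $H_0^0 = \{X_0^0, M_0^0, Z_0, \tilde Z_0\}$ and $X_0^1$ is independent of $(X_0^0, M_0^0)$: when $Z_0 = \emptyset$ the posterior of $X_0^1$ is simply its prior $\pi_{X_0^1}$, and when $Z_0 = X_0^1$ the state is directly observed.

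For the recursion~\eqref{eq:estimator_t} I would proceed by cases on $(z_{t+1}, \tilde z_t)$. If $z_{t+1} = x_{t+1}^1$ the claim is immediate. Otherwise, substitute the dynamics $X_{t+1}^1 = [D(M_t^{0:1})]_{1\bullet}\,\vecc(X_t^0, X_t^1, U_t^0, U_t^1) + W_t^1$ with $U_t^0 = u_t^{0*}$ (deterministic given $h_t^0$) and $U_t^1 = \bar q_t^*(M_t^1) + \tilde q_t^*(X_t^1, M_t^1)$, and compute $\ee[X_{t+1}^1 \,|\, H_{t+1}^0]$ using: (a) zero-mean $W_t^1$ independent of $H_{t+1}^0$; (b) the conditional independence of $X_t^1$ and $M_t^1$ given $H_t^0$ noted in Section~\ref{section:decomposition_local_strategies}; (c) the fact that the extra conditioning on $(X_{t+1}^0, M_{t+1}^0)$ beyond $H_t^0$ is equivalent to conditioning on $(W_t^0, M_{t+1}^0)$, which are independent of $(X_t^1, M_t^1, W_t^1)$ and so leave the posterior on the hidden quantities unchanged; and (d) the defining identity $\int \tilde q_t^*(x_t^1, m_t^1)\,\theta_t(dx_t^1) = 0$ from $\tilde{\mathcal{Q}}(\theta_t)$, which kills the $\tilde q_t^*$ contribution upon first conditioning on $M_t^1$. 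The remaining terms average the linear expression $[D(m_t^{0:1})]_{1\bullet}\,\vecc(x_t^0, \hat x_t^1, u_t^{0*}, \bar q_t^*(m_t^1))$ against the posterior of $M_t^1$ given $H_{t+1}^0$, which is the prior $\pi_{M^1}$ when $\tilde z_t = \emptyset$ and the point mass at $m_t^1$ when $\tilde z_t = m_t^1$, recovering the two remaining branches of~\eqref{eq:estimator_t}.

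The main obstacle is the careful bookkeeping of the joint posterior on $(X_t^1, M_t^1)$ given $H_{t+1}^0$ when $z_{t+1} = \emptyset$: one must confirm that neither $X_{t+1}^0$ nor $M_{t+1}^0$ carries any new information about the hidden $(X_t^1, M_t^1)$, and that the cross term involving $\tilde q_t^*$ vanishes because its zero-mean property holds under $\theta_t(dx_t^1)$ for each fixed $m_t^1$ rather than under the joint conditional law of $(X_t^1, M_t^1)$; this forces the order of conditioning (first on $M_t^1$, then integrate out $X_t^1$). Once the independence structure is laid out explicitly, the remaining manipulations are routine linearity-of-expectation calculations.
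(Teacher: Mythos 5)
Your proposal is correct and follows essentially the same route as the paper: part (i) is the translation of the optimal prescriptions from Theorem \ref{thm:Sol_mode} back to decentralized strategies, and part (ii) computes $\hat x_{t+1}^1=\mu(\theta_{t+1})$ case-by-case on $(z_{t+1},\tilde z_t)$ using the zero-mean noise, the zero-mean property of $\tilde q_t^*$ under $\theta_t(dx_t^1)$ for each fixed $m_t^1$, and the posterior $\omega^{\tilde z_t}$ on $M_t^1$. The only cosmetic difference is that the paper obtains the recursion by integrating the explicit belief-update map $\psi_t$ of Lemma \ref{lm:beliefupdate} (which already packages the conditional-independence bookkeeping you re-derive inline in steps (b)--(c)), whereas you recompute $\ee[X_{t+1}^1\,|\,H_{t+1}^0]$ directly from the dynamics.
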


\begin{proof}
See Appendix \ref{Proof_Thm_optimal_strategies} for a proof.
\end{proof}

\begin{remark}
When the transmission from the controller $C^1$ to the controller $C^0$ is successful ($\gamma_t=1$), the controller $C^0$ is aware of the state $x_t^1$ and local mode $m_t^1$. In this case, the term inside the minimization of \eqref{eq:DP_V} does not depend on function $\tilde q_t$ and also it does not depends on $\bar q_t(\ell)$ for all $\ell \neq m_t^1$. Hence, they can be chosen arbitrarily or set to be zero as described in \eqref{eq:opt_qbar_mode_l} and \eqref{eq:opt_gamma_mode}.
\end{remark}


\begin{remark}
If there is only possible value for the local mode, that is, $|\mathcal{M}^1|=1$, the controller $C^0$ knows this mode irrespective of the state of the link from the controller $C^1$ to the controller $C^0$. In this case, the optimal controller for this problem is described by \eqref{eq:opt_ubar_mode_empty} and \eqref{eq:opt_gamma_mode_empty} if $\mathcal{M}^1 = \{1 \}$.
\end{remark}

\section{Conclusion}
\label{sec:conclusion}

{\color{black}
We considered a discrete-time stochastically switched system consisting of two plants, global plant and local plant, and two controllers, global controller and local controller. 
We assumed the presence of an unreliable TCP-like communication channel through which the local controller can inform the global controller of the local plant's state and mode. We obtained explicit optimal strategies for the two controllers. In the optimal strategies, both controllers compute a common estimate of the local plant's state. The global controller's action is linear in the state of the global plant and the common estimated state, and the local controller's action is linear in the actual states of both plants and the common estimated state. 
Furthermore, the gain matrices for the global controller depend on the global mode and its observation about the local mode, while the gain matrices for the local controller depend on the actual modes of both plants and the global controller's observation about the local mode.
}


\bibliographystyle{ieeetr}
\bibliography{IEEEabrv,References,packet_drop,collection,switched_systems}

\appendices
\section{Short-hands and Operators}
We first define some shorthands for the simplicity of the presentation. For a vector $x$ and a matrix $G$, we define 
$QF(G,x)= x^\tp G\, x = \tr(Gxx^\tp)$. Furthermore, for a matrix $G = \begin{bmatrix}
G^{11} & G^{12} \\ G^{21} & G^{22}
\end{bmatrix}$, we define $\textit{SC}(G,G^{22}):=G^{11} - G^{12}(G^{22})^{-1}G^{21}$ as the Schur complement of $G^{22}$ in $G$. For matrices $G^1,\ldots, G^k$, we define $HC(G^1,\ldots, G^k): = [G^1,\ldots, G^k]$ to represent the horizontal concatenation of $G^1,\ldots, G^k$. Furthermore, we use $\mathds{1}_{E}(\cdot)$ to denote the indicator function of set $E$, that is, $\mathds{1}_{E}(x) = 1$ if $x \in E$, and $0$ otherwise.

Operators:
\begin{itemize}
\item Let $G$ be a collection of matrices $G(m^0, \tilde z) \in \R^{d_X\times d_X}$ for all $m^0 \in \mathcal{M}^0$ and $\tilde z \in \mathcal{\tilde M}^1 = \mathcal{M}^1 \cup \{\emptyset \}$. Then, define
\begin{small}
\begin{align}
\Pi(G) = \ee[G(M_t^0, \tilde Z_t)] = \sum_{\gamma \in \{0,1\}}p(\gamma) \Pi(G,\gamma).
\label{operator_Pi}
\end{align}
\end{small}
where for any $\gamma \in \{0,1\}$
\begin{small}
\begin{align}
\Pi(G, \gamma) = \ee[G(M_t^0, \tilde Z_t) \vert \Gamma_{t} = \gamma].
\label{operator_Pi_gamma}
\end{align}
\end{small}
Note that
\begin{small}
\begin{align}
\Pi(G,0) 
&= 
\sum_{\substack{m^0 \in \mathcal{M}^0}}
G(m^0,\emptyset)\pi_{M^0}(m^0),  \notag \\
\Pi(G,1) 
& = 
\sum_{\substack{m^0 \in \mathcal{M}^0}} \sum_{\substack{m^1 \in \mathcal{M}^1}}
G(m^{0:1})\pi_{M^0}(m^0)\pi_{M^1}(m^1).
\notag
\end{align}
\end{small}
\item For $n = 1,2$, let $G^n$ be a collection of matrices $G^n(m^0, \tilde z) \in \R^{d_X\times d_X}$ for all $m^0 \in \mathcal{M}^0$ and $\tilde z \in \mathcal{\tilde M}^1$. Then, define
\begin{small}
\begin{align}
\Psi(G^1, G^2) = p(0)\Pi(G^1,0) +  p(1) \Pi(G^2,1).
\label{operator_Psi}
\end{align}
\end{small}
\end{itemize}
Matrices:
\begin{itemize}
\item For each $m^1 \in \mathcal{M}^1$, $L_{m^1} \in \R^{(d_X + d_U^0 + d_U^1) \times (d_X + d_U^0 + \kappa^1 d_U^1)}$ given by 
\begin{small}
\begin{align}
\hspace{-10mm}
L_{m^1} &= HC \Big(
\begin{bmatrix}
\mathbf{I}_{d_X}  &\mathbf{0}   \\
 \mathbf{0} &\mathbf{I}_{d_U^0} \\
 \mathbf{0} & \mathbf{0}
\end{bmatrix}
,\mathbf{0}_{(d_X + d_U^0 + d_U^1)  \times (d_X + d_U^0 + (m^1-1) d_U^1)}, \notag \\
&\begin{bmatrix}
\mathbf{0} \\
\mathbf{0}  \\
 \mathbf{I}_{d_U^1}
\end{bmatrix}
,
\mathbf{0}_{(d_X + d_U^0 + d_U^1)  \times (d_X + d_U^0 + (\kappa^1 - m^1) d_U^1)}
\Big).
\label{Lm_matrix}
\end{align}
\end{small}
\begin{small}
\item $Q_t(M_t^{0:1}) = \begin{bmatrix}
Q^{00}_t(M_t^{0:1}) & Q^{01}_t(M_t^{0:1}) \\
Q^{10}_t(M_t^{0:1}) & Q^{11}_t(M_t^{0:1}) 
\end{bmatrix}$, 
\vspace{1mm}
\item $R_t(M_t^{0:1}) = \begin{bmatrix}
R^{00}_t(M_t^{0:1}) & R^{01}_t(M_t^{0:1}) \\
R^{10}_t(M_t^{0:1}) & R^{11}_t(M_t^{0:1}) 
\end{bmatrix}$, 
\vspace{1mm}
\item $D^{11}(m^{0:1}) = \begin{bmatrix}
 A^{11}(m^{0:1}) & B^{11}(m^{0:1})
\end{bmatrix}$,
\vspace{1mm}
\item $D^{\text{aug}}(m^{0:1}) = \begin{bmatrix}
A(m^{0:1}) & B(m^{0:1})
\end{bmatrix}L_{m^1}$,
\vspace{1mm}
\item $D^{\emptyset}(m^{0}) = \sum_{m^1 \in \mathcal{M}^1} D^{\text{aug}}(m^{0:1}) \pi_{M^1}(m^1)$,
\vspace{1mm}
\item $C_t(m^{0:1})= \begin{bmatrix}
Q_t(m^{0:1}) & \mathbf{0} \\
\mathbf{0} & R_t(m^{0:1})
\end{bmatrix}$,
\vspace{1mm}
\item $C_t^{\emptyset}(m^{0}) = 
\sum_{m^1 \in \mathcal{M}^1} L_{m^1}^{\tp}C_t(m^{0:1})L_{m^1} \pi_{M^1}(m^1)$, 
\vspace{1mm}
\item $C_t^{11}(m^{0:1})= \begin{bmatrix}
Q_t^{11}(m^{0:1}) & \mathbf{0} \\
\mathbf{0} & R_t^{11}(m^{0:1})
\end{bmatrix}$.
\end{small}
\end{itemize}

\section{Recursions for $P_t,\tilde P_t$ and equations for $K_t,\tilde K_t$}
\label{Recursions}
For any $m^0 \in \mathcal{M}^0$, $m^1 \in \mathcal{M}^1$, and $\tilde z \in \mathcal{\tilde{M}}^1$, matrices $P_t(m^0,\tilde z)$ and $\tilde P_t(m^0,\tilde z)$ are recursively calculated as follows,
\begin{small}
\begin{align}
&P_{T+1}(m^0,\tilde z) = \begin{bmatrix}
P_{T+1}^{00}(m^0,\tilde z) & P_{T+1}^{01}(m^0,\tilde z) \\
P_{T+1}^{10}(m^0,\tilde z)  & P_{T+1}^{11}(m^0,\tilde z)
\end{bmatrix} = \mathbf{0}, 
\label{eq:recursion_P_T} \\
& \tilde P_{T+1}(m^0,\tilde z)  = \mathbf{0},
\label{eq:recursion_tilde_P_T}
\end{align}
\end{small}
\begin{small}
\begin{align}
&E_t^{\emptyset}(m^0) = D^{\emptyset}(m^0)^{\tp} \Pi(P_{t+1})D^{\emptyset}(m^0),
\label{eq:recursion_E_t} 
\\
&F_t^{\emptyset}(m^0) = \notag \\
&\sum_{\substack{m^1 \in \mathcal{M}^1}}
[D^{\text{aug}}(m^{0:1})]_{1\bullet}^\tp \Psi(\tilde P_{t+1}, P_{t+1}^{11}) [D^{\text{aug}}(m^{0:1})]_{1\bullet} \pi_{M^1}(m^1)
\notag \\
& \quad - [D^{\emptyset}(m^{0})]_{1\bullet}^{\tp} \Psi(\tilde P_{t+1}, P_{t+1}^{11}) [D^{\emptyset}(m^{0}) ]_{1\bullet}
\label{eq:recursion_F_t}
\end{align}
\end{small}
\begin{small}
\begin{align}
& H_t(m^0,\tilde z)  =  \begin{bmatrix}
H_t^{XX}(m^0,\tilde z)  & H_t^{XU}(m^0,\tilde z)  \\
H_t^{UX}(m^0,\tilde z)  & H_t^{UU}(m^0,\tilde z) 
\end{bmatrix}
\notag \\
&=\left\{
\begin{array}{ll}
C_t^{\emptyset}(m^0) + E_t^{\emptyset}(m^0) +F_t^{\emptyset}(m^0) & \text{ if } \tilde z= \emptyset,\\
C_t(m^{0:1}) + D(m^{0:1})^{\tp} \Pi(P_{t+1}) D(m^{0:1}) & \text{ if } \tilde z = m^1.
\end{array}\right.
\label{eq:recursion_H_t} 
\\
& P_t(m^0,\tilde z)  = \textit{SC}\big(H_t(m^0,\tilde z), H_t^{UU}(m^0,\tilde z) \big) 
\label{eq:recursion_P_t}
\end{align}
\end{small}
\begin{small}
\begin{align}
&\tilde H_t(m^{0:1}) = 
\begin{bmatrix}
\tilde H_t^{X^1X^1} (m^{0:1}) & \tilde H_t^{X^1U^1} (m^{0:1}) \\
\tilde H_t^{U^1X^1} (m^{0:1}) & \tilde H_t^{U^1U^1} (m^{0:1})
\end{bmatrix} \notag \\
&=C_t^{11} (m^{0:1}) +D^{11}(m^{0:1})^{\tp} \Psi(\tilde P_{t+1}, P_{t+1}^{11}) D^{11}(m^{0:1}),
\label{eq:recursion_tilde_H_t}
\end{align}
\end{small}
\begin{small}
\begin{align}
& \tilde P_t(m^0,\tilde z)  = \notag \\
&\left\{
\begin{array}{ll}
 \sum_{m^1 \in \mathcal{M}^1} \pi_{M^1}(m^1)
\textit{SC}\big(\tilde H_t(m^{0:1}), \tilde H_t^{U^1U^1}(m^{0:1}) \big)  & \text{if } \tilde z= \emptyset,\\
\textit{SC}\big(\tilde H_t(m^{0:1}), \tilde H_t^{U^1U^1}(m^{0:1}) \big)  & \text{if } \tilde z = m^1.
\end{array}\right.
\label{eq:recursion_P_t} 
\end{align}
\end{small}
Furthermore, at each time $t$, the gain matrices $K_t(m^0,\tilde z)$ and $\tilde K_t(m^{0:1})$ for any $m^0 \in \mathcal{M}^0$, $m^1 \in \mathcal{M}^1$, and $\tilde z \in \mathcal{\tilde{M}}^1$ are calculated as follows,
\begin{small}
\begin{align}
&K_t(m^{0},\tilde z) = -\big(H_t^{UU}(m^{0},\tilde z)\big)^{-1} H_t^{UX}(m^{0},\tilde z),
\label{K_t} \\
& \tilde K_t(m^{0:1})= 
- \big(\tilde H_t^{U^1U^1}(m_t^{0:1})\big)^{-1} \tilde H_t^{U^1X^1}(m_t^{0:1}).
\label{tilde K_t}
\end{align}
\end{small}

\section{Preliminary Results}

\newtheorem{claim}{Claim}

\begin{claim}
\label{clm:independence_state_mode}
Consider a feasible prescription strategy $\phi^{\diamond}_t = \phi^{prs}_{0:t-1} \in \Phi^{prs}$. Then, the random vectors $M_t^0$,$X_{t}^{0}$ and $X_{t}^1$ are conditionally independent given the common information $H^0_{t-1}$. 
That is, for any measurable sets $E^0 \subset \R^{d_X^0}$, $E^1 \subset \R^{d_X^1}$, and $F \subset \mathcal{M}^1$,

\begin{small}
\begin{align}
&\prob^{\phi^{\diamond}_t}(M_t^0 \in F, X_{t}^0 \in E^0, X_{t}^1 \in E^1 | H^0_{t-1}) \notag \\
&= \prob(M_t^0 \in F) \prob^{\phi^{\diamond}_t}(X_{t}^0 \in E^0| H^0_{t-1})
\prob^{\phi^{\diamond}_t}(X_{t}^1 \in E^1 | H^0_{t-1}).
\end{align}
\end{small}
\end{claim}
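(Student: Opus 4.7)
The plan is to establish the joint conditional independence by verifying two separate facts: (i) $M_t^0$ is independent of the triple $(H^0_{t-1}, X_t^0, X_t^1)$, and (ii) $X_t^0$ and $X_t^1$ are conditionally independent given $H^0_{t-1}$. Combining these two facts will immediately yield the product form asserted by the claim.

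For (i), I would rely on the primitive independence assumption from the model: the collection $\{X_0^n, W_t^n, M_t^n : n=0,1,\ t=0,\ldots,T\}$ together with the link indicators $\Gamma_{0:T}$ is mutually independent. Unrolling the closed-loop dynamics under the fixed prescription strategy $\phi^{\diamond}_t$, each component of $H^0_{t-1}$ as well as $X_t^0$ and $X_t^1$ can be written as a Borel measurable function of the primitives $\{X_0^{0:1}, W_{0:t-1}^{0:1}, M_{0:t-1}^{0:1}, \Gamma_{0:t-1}\}$. Since $M_t^0$ is independent of this entire collection, it is independent of $(H^0_{t-1}, X_t^0, X_t^1)$, which allows us to factor $\prob(M_t^0 \in F)$ out of the joint conditional probability.

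For (ii), I would use the block-lower-triangular form of the dynamics to write $X_t^0 = A^{00}(M_{t-1}^0) X_{t-1}^0 + B^{00}(M_{t-1}^0) U_{t-1}^0 + W_{t-1}^0$, in which every term except $W_{t-1}^0$ is $\sigma(H^0_{t-1})$-measurable. Thus, given $H^0_{t-1}$, the variable $X_t^0$ is an affine function of $W_{t-1}^0$. The local state satisfies an analogous recursion that depends on the private quantities $X_{t-1}^1, M_{t-1}^1, U_{t-1}^1, W_{t-1}^1$, each of which, together with $H^0_{t-1}$, is measurable with respect to the primitives $\{X_0^{0:1}, W_{0:t-2}^{0:1}, W_{t-1}^1, M_{0:t-1}^{0:1}, \Gamma_{0:t-1}\}$, a collection that excludes $W_{t-1}^0$. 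Since $W_{t-1}^0$ is therefore independent of $(H^0_{t-1}, X_t^1)$ and retains its marginal distribution conditionally on $H^0_{t-1}$, a standard Fubini argument applied to the affine transition kernel of $X_t^0$ yields the desired conditional independence of $X_t^0$ and $X_t^1$ given $H^0_{t-1}$.

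The main obstacle is the bookkeeping required to justify that $U_{t-1}^1 = [\phi_{t-1}^1(H^0_{t-1})](X_{t-1}^1, M_{t-1}^1)$, and hence $X_t^1$, carries no dependence on $W_{t-1}^0$. This is handled by a simple induction on $t$: assuming inductively that $X_{t-1}^1$ and $M_{t-1}^1$ are functions of primitives drawn from $\{X_0^{0:1}, W_{0:t-2}^{0:1}, M_{0:t-1}^{0:1}, \Gamma_{0:t-1}\}$ and that $H^0_{t-1}$ has the same property, the prescription applied at time $t-1$ introduces no new dependence on $W_{t-1}^0$, and one further step of the dynamics preserves this property. Combining (i) and (ii) then factors the joint probability as $\prob(M_t^0 \in F)\,\prob^{\phi^{\diamond}_t}(X_t^0 \in E^0 \mid H^0_{t-1})\,\prob^{\phi^{\diamond}_t}(X_t^1 \in E^1 \mid H^0_{t-1})$, which is exactly the statement of the claim.
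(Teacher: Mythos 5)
Your proposal is correct and follows essentially the same route as the paper: the paper's proof likewise observes that, given $h_{t-1}^0$, $X_t^0$ is a function of $W_{t-1}^0$ alone while $X_t^1$ is a function of $X_0^1, W_{0:t-1}^1, M_{0:t-1}^1$, and then invokes the mutual independence of the primitives $\{M_t^0, X_0^1, W_{0:t-1}^1, M_{0:t-1}^1, W_{t-1}^0\}$. You merely make explicit the measurability bookkeeping and the factorization into the two sub-claims that the paper leaves implicit.
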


\begin{proof}
Note that given any realization $h_{t-1}^0$ of $H_{t-1}^0$, $X_t^0$ is a function of only $W_{t-1}^0$ while $X_t^1$ is a function of $X_0^1, W_{0:t-1}^1,M_{0:t-1}^1$. Then, the proof is easily resulted form the fact that random variables in the collection $\{M_{t}^0, X_0^1, W_{0:t-1}^1,M_{0:t-1}^1, W_{t-1}^0 \} $ are independent.
\end{proof}

\begin{lemma}
\label{lm:omega_equation}
Consider a feasible prescription strategy $\phi^{\diamond}_t = \phi^{prs}_{0:t-1} \in \Phi^{prs}$. Then, for any $h_t^0 \in \mathcal{H}_t^0$ and for any $m \in \mathcal{M}^1$,  the $C^0$'s belief $\prob^{\phi^{\diamond}_t}(M_t^1 =m | h^0_{t})$ on the local mode $M_t^1$ is $\omega^{\tilde z_t}(m)$ where
\begin{itemize}
\item  If $\tilde z_{t} = m_{t}^1$, then  $\omega^{m_t^1}(m)= \mathds{1}_{\{m\}}(m_{t}^1)$.
\item If $\tilde z_{t} = \emptyset$, then $\omega^{\emptyset}(m) = \pi_{M_t^1}(m)$.
\end{itemize}
\end{lemma}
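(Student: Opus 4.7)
The plan is to treat the two cases of the lemma separately. The first case, where $\tilde z_t = m_t^1 \in \mathcal{M}^1$, is immediate from the definition of the channel output $\tilde Z_t$ in \eqref{Model:channel}: the event $\{\tilde Z_t = m_t^1\}$ can only occur when $\Gamma_t = 1$, and on that event $\tilde Z_t$ is defined to equal $M_t^1$ by construction. Hence any $h_t^0$ whose $\tilde Z_t$-coordinate equals $m_t^1$ forces $M_t^1 = m_t^1$ almost surely, yielding the Dirac mass $\omega^{m_t^1}(m) = \mathds{1}_{\{m\}}(m_t^1)$ without further calculation.

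The substantive case is $\tilde z_t = \emptyset$. My goal is to show that, on the event $\{\tilde Z_t = \emptyset\}$, the random variable $M_t^1$ is independent of every coordinate of $H_t^0$, so that its posterior coincides with its prior $\pi_{M^1}$. To this end I would introduce the collection of primitive random variables
\[
  \mathcal{S}_t = \bigl\{X_0^{0:1},\ W_{0:t-1}^{0:1},\ M_{0:t}^0,\ M_{0:t-1}^1,\ \Gamma_{0:t}\bigr\},
\]
which, by the mutual independence assumption on $\{X_0^n, W_s^n, M_s^n, \Gamma_s\}$ stated in Section \ref{sec:model}, is jointly independent of $M_t^1$. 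I would then prove by induction on the time index $s \leq t$ that every coordinate of $H_t^0$, together with the indicator of $\{\tilde Z_t = \emptyset\}$, is $\sigma(\mathcal{S}_t)$-measurable. The key observations driving the induction are: (i) $X_s^0$ evolves from $X_0^0, W_{0:s-1}^0, M_{0:s-1}^0$, and $U_{0:s-1}^0$ via the top block row of \eqref{Model:system_0}; (ii) each $U_s^0$ is a measurable function of $H_s^0$, so the induction hypothesis applies; (iii) the channel outputs $Z_s,\tilde Z_s$ for $s<t$ are determined by $\Gamma_s, X_s^1, M_s^1$, with $X_s^1$ itself built from $X_0^1, W_{0:s-1}^1, M_{0:s-1}^{0:1}$, and $U_{0:s-1}^{0:1}$; and (iv) on $\{\tilde Z_t = \emptyset\}$ we have $\Gamma_t = 0$, so $Z_t = \emptyset$ as well, and the time-$t$ channel output carries no information about $M_t^1$ beyond that already in $\Gamma_t \in \mathcal{S}_t$.

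Combining these steps, $H_t^0$ is $\sigma(\mathcal{S}_t)$-measurable on $\{\tilde Z_t = \emptyset\}$, hence independent of $M_t^1$, which gives
\[
  \prob^{\phi^{\diamond}_t}(M_t^1 = m \mid h_t^0) = \prob(M_t^1 = m) = \pi_{M^1}(m),
\]
as required. The main obstacle is essentially bookkeeping: past control actions depend on past successful transmissions of $M_s^1$ for $s<t$, so one must be careful that the induction correctly tracks all such recursive dependencies. This is handled cleanly because the independence assumption makes $M_t^1$ independent of the \emph{entire} past sequence $\{M_s^1\}_{s<t}$, not only of the current $\Gamma_t$, so any information in $H_t^0$ about past local modes is irrelevant to the conditional law of $M_t^1$.
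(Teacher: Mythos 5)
Your proposal is correct and follows essentially the same route as the paper's proof: the $\tilde z_t = m_t^1$ case is immediate because $\tilde z_t$ is part of $h_t^0$, and the $\tilde z_t = \emptyset$ case reduces to observing that every component of $h_t^0$ is a function of primitive random variables from times $0$ through $t-1$ (plus $M_t^0$ and $\Gamma_{0:t}$), all of which are independent of $M_t^1$. Your explicit induction over the history merely formalizes the measurability claim the paper states in one line, so no substantive difference.
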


\begin{proof}
Note that $\tilde z_t$ is included in $h_t^0$. Hence, when $\tilde z_{t} = m_{t}^1$, we have
\begin{small}
\begin{align}
\prob^{\phi^{\diamond}_t}(M_t^1 =m | h^0_{t}) = \prob(m_t^1 =m) = \mathds{1}_{\{m\}}(m_{t}^1).
\end{align}
\end{small}
When $\tilde z_{t} = \emptyset$, we have
\begin{small}
\begin{align}
&\prob^{\phi^{\diamond}_t}(M_t^1 =m | h^0_{t}) =
\prob^{\phi^{\diamond}_t}(M_t^1 =m | h^0_{t-1}, x_t^0,m_t^0,u_{t-1}^0) \notag \\
&= \prob^{\phi^{\diamond}_t}(M_t^1 =m) = \pi_{M_t^1}(m)
\label{belief_mode_proof}
\end{align}
\end{small}
Note that $X_{t}^0 = [D(m_{t-1}^{0:1})]_{0\bullet} S_{t-1} + W_{t-1}^0$ where $S_{t-1} = \vecc(X_{t-1}^{0:1},U_{t-1}^{0:1})$. Hence, the second equality of \eqref{belief_mode_proof} is true because 1) the collection $\{X_{t}^0, H_{t-1}^0, U_{t-1}^0 \}$ depend on random variables from time $0$ to $t-1$ and $M_t^1$ is independent of all previous random variables; 2) $M_t^1$ is independent of $M_t^0$. 
\end{proof}

\section{Conditional independence of $X_t^{1}$ and $M_t^1$ given $H^0_t$}
\label{Proof:independence_of_state_mode}
\begin{lemma}
\label{lm:independence_of_state_mode}
Consider any feasible prescription strategy $\phi^{\diamond}_t = \phi^{prs}_{0:t-1} \in \Phi^{prs}$. Then, $X_t^{1}$ and $M_t^1$ are conditionally independent given the common information $H^0_t$. 
\end{lemma}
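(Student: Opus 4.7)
The plan is to argue by case analysis on the value of $\tilde{Z}_t$, because the informational content of $H^0_t$ about $M_t^1$ depends entirely on whether the packet at time $t$ was successfully transmitted.

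\emph{Case 1:} $\tilde{z}_t \in \mathcal{M}^1$. On the event $\{H^0_t = h^0_t\}$, the TCP-like acknowledgement forces $M_t^1 = \tilde{z}_t$ almost surely, so $M_t^1$ is a deterministic function of $H^0_t$. Conditional independence of $X_t^1$ and a constant is immediate. (This matches $\omega^{m_t^1}(m) = \mathds{1}_{\{m\}}(m_t^1)$ from Lemma~2.)

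\emph{Case 2:} $\tilde{z}_t = \emptyset$, which occurs iff $\Gamma_t = 0$. Here the strategy is to show the stronger statement $M_t^1 \perp (X_t^1, H^0_t)$ on the event $\{\Gamma_t = 0\}$, from which conditional independence given $H^0_t$ follows immediately. For this I would carry out three steps. First, note that when $\Gamma_t = 0$, the components $Z_t$ and $\tilde{Z}_t$ of $H^0_t$ are identically $\emptyset$ and thus carry no information about $M_t^1$. Second, I would argue by induction on $s = 0, 1, \dots, t$ that, under any feasible $\phi^{prs}_{0:t-1}$, the tuple $(H^0_s, X_s^1, U_{s-1}^0, U_{s-1}^1)$ is a measurable function of the primitive random variables in the set $\mathcal{V}_t := \{X_0^{0:1}, W_{0:t-1}^{0:1}, M_{0:t-1}^{0:1}, M_t^0, \Gamma_{0:t}\}$: the base case follows from \eqref{Model:channel}, and the inductive step uses the dynamics \eqref{Model:system_0}, the strategy decomposition $U_s^1 = [\phi^1_s(H^0_s)](X_s^1, M_s^1)$, and the fact that $\tilde{Z}_s, Z_s$ are by \eqref{Model:channel} functions of $(\Gamma_s, X_s^1, M_s^1)$. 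Crucially, on $\{\Gamma_t=0\}$, the variables $M_t^1$ does not enter this construction at time $t$ because $Z_t = \tilde{Z}_t = \emptyset$, and $X_t^1$ itself depends on the $M^1$-sequence only through $M^1_{0:t-1}$. Third, by the paper's standing independence assumption, $M_t^1$ is independent of every variable in $\mathcal{V}_t$.

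Combining these three steps, $M_t^1$ is independent of $(X_t^1, H^0_t)$ on the event $\{\Gamma_t = 0\} = \{\tilde{Z}_t = \emptyset\}$, which yields the conditional factorization
\begin{align*}
\prob^{\phi^{\diamond}_t}(X_t^1 \in E, M_t^1 = m \mid h^0_t)
= \pi_{M^1}(m)\, \prob^{\phi^{\diamond}_t}(X_t^1 \in E \mid h^0_t),
\end{align*}
and the right factor coincides with $\prob^{\phi^{\diamond}_t}(M_t^1 = m \mid h^0_t) = \pi_{M^1}(m)$ by Lemma~2.

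The main obstacle will be the inductive bookkeeping in the second step: tracking precisely which primitive random variables each element of $H^0_s$ depends on, and ensuring that $M_t^1$ never sneaks in through $U^1_{s}$ for $s < t$ (it does not, because $U^1_s$ only uses $(X_s^1, M_s^1)$ through $\phi^1_s$, and $M_s^1$ for $s < t$ is in $\mathcal{V}_t$). Once this measurability fact is established, the conclusion follows mechanically from the independence of $M_t^1$ from $\mathcal{V}_t$ and a routine Fubini argument.
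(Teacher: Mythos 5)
Your proposal is correct and follows essentially the same route as the paper: a case split on whether the packet at time $t$ was delivered, with the delivered case trivial because $\tilde Z_t$ (and $Z_t$) reveal $M_t^1$ (and $X_t^1$), and the dropped case resting on the fact that $M_t^1$ is independent of every primitive random variable that generates $(X_t^1, H^0_t)$ on $\{\Gamma_t=0\}$. The only difference is one of rigor: the paper simply asserts that $\{X_t^0, X_t^1, H_{t-1}^0, U_{t-1}^0\}$ are functions of randomness up to time $t-1$ (plus $M_t^0$), whereas you propose to establish this measurability claim by an explicit induction, which is a sound way to fill in that step.
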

\begin{proof}
According to \eqref{Model:info}, for any measurable sets $E \subset \R^{d_X^1}$  and $F \subset \mathcal{M}^1$, we have,
\begin{small}
\begin{align}
& \prob^{\phi^{\diamond}_t}(X_t^1 \in E, M_t^1 \in F | H^0_{t}) \notag \\
&= \prob^{\phi^{\diamond}_t}(X_t^1 \in E, M_t^1 \in F | H^0_{t-1}, X_t^0,M_t^0, Z_t, \tilde Z_t, U_{t-1}^0)
\label{eq:proof_indep}
\end{align}
\end{small}
Now, consider two following cases. If $\Gamma_t^0 = 1$, we have
\begin{small}
\begin{align}
&\prob^{\phi^{\diamond}_t}(X_t^1 \in E, M_t^1 \in F | H^0_{t-1}, X_t^0,M_t^0, Z_t, \tilde Z_t, U_{t-1}^0)
\ind_{\{\Gamma_t =1\}} \notag \\ 
&=\prob^{\phi^{\diamond}_t}(X_t^1 \in E, M_t^1 \in F | H^0_{t-1}, X_t^0,M_t^0, X_t^1, M_t^1, U_{t-1}^0)
\ind_{\{\Gamma_t =1\}} \notag \\ 
&=\prob^{\phi^{\diamond}_t}(X_t^1 \in E | H^0_{t})
\prob^{\phi^{\diamond}_t}(M_t^1 \in F | H^0_{t})
\ind_{\{\Gamma_t =1\}}.
\end{align}
\end{small}
 If $\Gamma_t^0 = 0$, we have
\begin{small}
\begin{align}
&\prob^{\phi^{\diamond}_t}(X_t^1 \in E, M_t^1 \in F | H^0_{t-1}, X_t^0,M_t^0, Z_t, \tilde Z_t, U_{t-1}^0)
\ind_{\{\Gamma_t =0\}} \notag \\ 
&=\prob^{\phi^{\diamond}_t}(X_t^1 \in E, M_t^1 \in F | H^0_{t-1}, X_t^0,M_t^0, U_{t-1}^0)
\ind_{\{\Gamma_t =0\}} \notag \\ 
&=\prob^{\phi^{\diamond}_t}(X_t^1 \in E| H^0_{t})
\prob^{\phi^{\diamond}_t}(M_t^1 \in F | H^0_{t})
\ind_{\{\Gamma_t =0\}}.
\label{proof_indep_2}
\end{align}
\end{small}
Note that $X_{t}^0 = [D(M_{t-1}^{0:1})]_{0\bullet} S_{t-1} + W_{t-1}^0$, $X_{t}^1 = [D(M_{t-1}^{0:1})]_{1 \bullet} S_{t-1} + W_{t-1}^1$ where $S_{t-1} = \vecc(X_{t-1}^{0:1},U_{t-1}^{0:1})$. Hence, the penultimate equality of \eqref{proof_indep_2} is true because 1) the collection $\{X_{t}^0, X_{t}^1, H_{t-1}^0, U_{t-1}^0 \}$ depend on random variables from time $0$ to $t-1$ and $M_t^1$ is independent of all previous random variables; 2) $M_t^1$ is independent of $M_t^0$. This completes the proof of Lemma \ref{lm:independence_of_state_mode}.
\end{proof}
\section{Recursive Update for Common belief $\theta_t$}
\label{Proof_Lm_belief_update}

\begin{lemma}
\label{lm:beliefupdate}

For any feasible prescription strategy $\phi^{prs} \in \Phi^{prs}$ and for any $h_t^0 \in \mathcal{H}_t^0$, the common belief $\theta_t$ can be updated according to 
\begin{align}
\theta_{t+1} = \psi_t(\theta_t, u_t^{prs}, x_t^0, m_t^0, \tilde z_t, z_{t+1}),
\label{update_theta}
\end{align}
where $u_t^{prs} = (u_t^0, \rho_t)= \phi^{prs}_{t}(h^0_t)$
and for any measurable set $E \subset  \R^{d_X^1}$, 
\begin{small}
\begin{align}
\theta_0 (E)  
&= \left\{\begin{array}{ll}
\pi_{X_0^1}(E)
& \text{ if }z_{0}=\emptyset, 
\\
\mathds{1}_{E}(x^1_0) & \text{ if }z_{0}=x_{0}^1.
\end{array}\right.
\label{eq:theta0}
\end{align}
\end{small}
\begin{itemize}
\item If $z_{t+1} = x_{t+1}^1$, then 
\begin{small}
\begin{align}
[\psi_t(\theta_t,u^{prs}_t, x_t^0, m_t^0, \tilde z_t, x_{t+1}^1)](E)= \mathds{1}_{E}(x^1_{t+1}).
\label{belief_theta_gamma_1}
\end{align}
\end{small}
\item If $z_{t+1} = \emptyset$, then 
\begin{small}
\begin{align}
&\hspace{-3mm}
 [\psi_t(\theta_t,u^{prs}_t, x_t^0,m_t^0, \tilde z_t, \emptyset)] (E)=  
\notag \\
& \hspace{-5mm} \sum_{m_t^1 \in \mathcal{M}^1} \int \int  \mathds{1}_{E}
\Big( [D(m_t^{0:1})]_{1\bullet} \vecc \big(x_t^{0:1}, u_t^0, \rho_t(x_t^1,m_t^1) \big) + w_t^1
\Big)
  \notag \\
&\hspace{17mm} \times \theta_t(dx_t^1) \omega^{\tilde z_{t}}(m_t^1)\pi_{W_t^1}(dw_t^1).
\label{eq:psit}
\end{align}
\end{small}
\end{itemize}

\end{lemma}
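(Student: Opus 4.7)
The plan is to verify the lemma by induction on $t$ and by case analysis on $z_{t+1}$. The key ingredients are the independence structure of the primitive random variables in \eqref{Model:system_0}, the channel definition \eqref{Model:channel}, and the two preliminary results (Lemma \ref{lm:omega_equation} and Lemma \ref{lm:independence_of_state_mode}) already established in the Appendix.

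For the initialization at $t=0$, I note that $h_0^0 = \{x_0^0, m_0^0, z_0, \tilde z_0\}$. If $z_0 = \emptyset$, then $\Gamma_0 = 0$, and since $X_0^1$ is independent of $(X_0^0, M_0^0, \Gamma_0)$ by the assumptions in Section \ref{sec:model}, the conditional distribution of $X_0^1$ given $h_0^0$ equals its prior $\pi_{X_0^1}$. If $z_0 = x_0^1$, then $X_0^1$ is directly observed and $\theta_0$ is the point mass at $x_0^1$, establishing \eqref{eq:theta0}.

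For the recursive step, observe that $h_{t+1}^0 = h_t^0 \cup \{x_{t+1}^0, m_{t+1}^0, z_{t+1}, \tilde z_{t+1}, u_t^0\}$ by \eqref{Model:info}, and since $u_t^0 = \phi_t^0(h_t^0)$ is deterministic given $h_t^0$, it provides no additional information. When $z_{t+1} = x_{t+1}^1$, the vector $X_{t+1}^1$ is directly revealed and we obtain the point-mass update \eqref{belief_theta_gamma_1}. The remaining case is $z_{t+1} = \emptyset$ (and correspondingly $\tilde z_{t+1} = \emptyset$), meaning $\Gamma_{t+1} = 0$. In this case the dynamics \eqref{Model:system_0} give
\begin{align*}
X_{t+1}^1 = [D(m_t^0, M_t^1)]_{1\bullet} \vecc\bigl(x_t^0, X_t^1, u_t^0, \rho_t(X_t^1, M_t^1)\bigr) + W_t^1,
\end{align*}
where $x_t^0, m_t^0, u_t^0$, and $\rho_t$ are all determined by $h_t^0$, so
\begin{align*}
\theta_{t+1}(E) = \ee^{\phi^{prs}_{0:t}}\bigl[\ind_E(X_{t+1}^1) \bigm| h_{t+1}^0\bigr].
\end{align*}
To convert this into a function of $h_t^0$ alone, I would show that the new data $(x_{t+1}^0, m_{t+1}^0, \Gamma_{t+1}=0)$ does not alter the joint conditional distribution of $(X_t^1, M_t^1, W_t^1)$ relative to $h_t^0$. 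This holds because $x_{t+1}^0$ depends on $h_t^0$ only through the deterministic triple $(x_t^0, m_t^0, u_t^0)$ plus the primitive noise $W_t^0$, because $M_{t+1}^0$ is an independent draw, and because $\Gamma_{t+1}$ is independent of everything else. Combining this with Lemma \ref{lm:omega_equation} (giving $M_t^1 \sim \omega^{\tilde z_t}$ conditional on $h_t^0$), Lemma \ref{lm:independence_of_state_mode} (giving conditional independence of $X_t^1$ and $M_t^1$ given $h_t^0$), and the independence of $W_t^1 \sim \pi_{W_t^1}$ from $h_t^0$, the conditional expectation above collapses to the triple integral in \eqref{eq:psit}.

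The main obstacle is the bookkeeping in the case $z_{t+1} = \emptyset$: one must carefully separate the three independent sources of randomness entering $h_{t+1}^0 \setminus h_t^0$ and verify that none of them refines the posterior on $(X_t^1, M_t^1, W_t^1)$. Once this is established, the integral form of $\psi_t$ follows from a straightforward application of the tower property together with the conditional independence results cited above.
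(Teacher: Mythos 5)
Your proposal is correct and follows essentially the same route as the paper: induction on $t$, the same case split on $\Gamma_{t+1}$, and the same three ingredients (the fact that $X_{t+1}^0$, being driven only by $(x_t^0,m_t^0,u_t^0)$ and $W_t^0$ via the block-triangular dynamics, carries no information about $(X_t^1,M_t^1,W_t^1)$ — which the paper isolates as Claim~\ref{clm:independence_state_mode} — together with Lemma~\ref{lm:omega_equation} and Lemma~\ref{lm:independence_of_state_mode} to factor the conditional expectation into the triple integral). The paper packages the induction slightly more formally by first defining the candidate kernel $\nu_t$ and then proving it is a version of the conditional law, invoking the disintegration theorem where you invoke the tower property, but the substance is identical.
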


\begin{proof}
To prove Lemma \ref{lm:beliefupdate} we proceed as follows. For any feasible prescription strategy $\phi^{prs} \in \Phi^{prs}$ and for any $h_t^0 \in \mathcal{H}_t^0$, we recursively define $\nu_t (h_t^0) \in \Delta(\R^{d_X^1})$ as follows:

For any measurable set $E \subset \R^{d_X^1}$,
\begin{small}
\begin{align}
[\nu_0(h_0^0)] (E)
&= \left\{\begin{array}{ll}
\pi_{X_0^1}(E)
& \text{ if }z_{0}=\emptyset, 
\\
\mathds{1}_{E}(x^1_0) & \text{ if }z_{0}=x_{0}^1.
\end{array}\right.
\label{eq:theta0}
\end{align}
\end{small}
For any measurable set $E \subset \R^{d_X^1}$,
\begin{small}
\begin{align}
[\nu_{t+1}(h_{t+1}^0)] (E)
&= [\psi_t(\nu_t(h_t^0),u^{prs}_t, x_t^0, m_t^0, \tilde z_t, z_{t+1})](E) ,
\label{eq:theta_update}
\end{align}
\end{small}
where $u_t^{prs} = (u_t^0, \rho_t)= \phi^{prs}_{t}(h^0_t)$ and $\psi_t^n(\nu_t(h_t^0),u^{prs}_t, x_t^0, m_t^0, \tilde z_t, z_{t+1})$ is defined as follows:
\begin{itemize}
\item If $z_{t+1} = x_{t+1}^1$, then 
\begin{align}
[\psi_t(\nu_t(h_t^0),u^{prs}_t, x_t^0, m_t^0, \tilde z_t, x_{t+1}^1)](E)= \mathds{1}_{E}(x^1_{t+1}).
\label{belief_theta_gamma_1}
\end{align}
\item If $z_{t+1} = \emptyset$, then 
\begin{small}
\begin{align}
& [\psi_t(\nu_t(h_t^0),u^{prs}_t, x_t^0,m_t^0, \tilde z_t, \emptyset)] (E)=  
\notag \\
& \hspace{-11mm}\sum_{m_t^1 \in \mathcal{M}^1} \int \int  \mathds{1}_{E}
\Big( [D(m_t^{0:1})]_{1 \bullet} \vecc \big(x_t^{0:1}, u_t^0, \rho_t(x_t^1,m_t^1) \big) + w_t^1
\Big) \notag \\
&\hspace{10mm} \times [\nu_t(h_t^0)](dx_t^1) \omega^{\tilde z_{t}}(m_t^1) \pi_{W_t^1}(dw_t^1).
\label{eq:psit}
\end{align}
\end{small}
\end{itemize}
Then, we show that $\nu_t$ is a conditional probability of $X_t^1$ given $H^0_t$, that is $[\nu_{t}(H_{t}^0)] (E) = \prob^{\phi^{\diamond}_t}(X_t^1 \in E | H^0_{t})$ where $\phi^{\diamond}_t = \phi^{prs}_{0:t-1}$.

First, note that from \eqref{eq:theta0}-\eqref{eq:psit}, $[\nu_t(\cdot)](E): \mathcal{H}_t^0 \mapsto \R$ is a measurable function. To show that
$[\nu_{t}(H_{t}^0)] (E) = \prob^{\phi^{\diamond}_t}(X_t^1 \in E | H^0_{t})$,
first note that for any $t$

\begin{small}
\begin{align}
&\prob^{\phi^{\diamond}_t}(X_{t}^1 \in E | H^0_{t}) = \prob^{\phi^{\diamond}_t}(X_{t}^1 \in E | H^0_{t-1}, X_t^0, M_t^0, Z_{t}, \tilde Z_t) \notag \\
&= \prob^{\phi^{\diamond}_t}(X_{t}^1 \in E | H^0_{t-1}, Z_{t}^n, \tilde Z_t, ),
\label{eq:nu_t_proof}
\end{align}
\end{small}
where the last equality is true due to Claim \ref{clm:independence_state_mode}.

We now prove by induction that 
\begin{align}
[\nu_{t}(H_{t}^0)] (E) = \prob^{\phi^{\diamond}_t}(X_{t}^1 \in E | H^0_{t-1}, Z_{t}, \tilde Z_t).
\label{nu_claim}
\end{align}
At time $t=0$, since $\Gamma_0 \in \{0,1\}$, consider two cases:
\begin{itemize}
\item If $\Gamma_0 =1$,
\begin{small}
\begin{align}
&\prob(X_0^1 \in E | Z_0, \tilde Z_0)\ind_{\{\Gamma_0 =1\}} = \prob(X_0^1 \in E | X_0^1, M_0^1) \ind_{\{\Gamma_0 =1\}} \notag \\
&= \prob(X_0^1 \in E | X_0^1) \ind_{\{\Gamma_0 =1\}} = \ind_{E}(X_0^1) \ind_{\{\Gamma_0 =1\}}.
\label{eq:gamma1_t0}
\end{align}
\end{small}
\item If $\Gamma_0 =0$,
\begin{small}
\begin{align}
&\prob(X_0^1 \in E | Z_0, \tilde Z_0) \ind_{\{\Gamma_0 =0\}} = \prob(X_0^1 \in E) \ind_{\{\Gamma_0 =0\}}  
\notag \\ &=\pi_{X_0^1}(E) \ind_{\{\Gamma_0 =0\}}. 
\end{align}
\end{small}
\end{itemize}
Hence, \eqref{nu_claim} holds at time $0$. Assume that \eqref{nu_claim} holds at time $t$.
This means that 
\begin{align}
\label{eq:induction_nu}
\prob^{\phi^{\diamond}_t}(dx_t^1 | H^0_{t}) = [\nu_{t}(H_{t}^0)] (dx_t^1).
\end{align}
At time $t+1$, since $\Gamma_{t+1} \in \{0,1\}$, consider two cases:
\begin{itemize}
\item If $\Gamma_{t+1} =1$, similar to \eqref{eq:gamma1_t0} we obtain
\begin{small}
\begin{align}
&\prob^{\phi^{\diamond}_{t+1}}(X_{t+1}^1 \in E | H^0_t, Z_{t+1},\tilde Z_{t+1}) 
\ind_{\{\Gamma_{t+1} =1\}} \notag \\
&=\ind_{E}(X_{t+1}^1) \ind_{\{\Gamma_{t+1} =1\}}
=[\nu_{t+1}(H_{t+1}^0)](E) \ind_{\{\Gamma_{t+1} =1\}}.
\end{align}
\end{small}
\item If $\Gamma_{t+1} =0$,
\begin{small}
\begin{align}
 &\hspace{-7mm} \prob^{\phi^{\diamond}_{t+1}}(X_{t+1}^1 \in E | H^0_t, Z_{t+1}, \tilde Z_{t+1}) \ind_{\{\Gamma_{t+1}^n =0\}} \notag \\
&\hspace{-7mm}=\prob^{\phi^{\diamond}_{t+1}}\big( f_t^1 (X_t^{0:1}, M_t^{0:1}, \phi_t^{prs}(H_t^0), W_t^1) \in E | H^0_t \big) \ind_{\{\Gamma_{t+1} =0\}} \notag \\
&\hspace{-7mm} 
=\sum_{m_t^1}
 \int\int\mathds{1}_{E}\big(f_t^1 (x_t^1, X_t^0, m_t^1, M_t^0, \phi_t^{prs}(H_t^0), w_t^1) \big)  \times
\notag \\
&\hspace{-4mm}\prob^{\phi^{\diamond}_{t}}(dx_t^1\vert H_t^0) \prob^{\phi^{\diamond}_{t}}(dw_t^1 \vert H_t^0) 
\prob^{\phi^{\diamond}_{t}}(M_t^1=m_t^1 \vert H_t^0)
\ind_{\{\Gamma_{t+1} =0\}} \notag \\
&\hspace{-7mm}
=\sum_{m_t^1}
 \int\int\mathds{1}_{E}\big(f_t^1 (X_t^0, x_t^1, M_t^0, m_t^1, \phi_t^{prs}(H_t^0), w_t^1) \big)  \times
\notag \\
&\hspace{0mm}
[\nu_{t}(H_{t}^0)] (dx_t^1) \pi_{W_t^1}(dw_t^1) \omega^{\tilde Z_{t}}(m_t^1)
\ind_{\{\Gamma_{t+1} =0\}}
\notag \\
&\hspace{-7mm}=[\nu_{t+1}(H_{t+1}^0)](E) \ind_{\{\Gamma_{t+1} =0\}},
\label{belief_proof_gamma_1}
\end{align}
\end{small}
where we have defined 
\begin{small}
\begin{align}
&f_t^1 (x_t^{0:1},m_t^{0:1}, u_t^{prs}, w_t^1)  \notag \\
&=
[D(m_t^{0:1})]_{1\bullet} \vecc \big(x_t^{0:1}, u_t^0, \rho_t(x_t^1,m_t^1) \big) + w_t^1.
\end{align}
\end{small}
Note that in \eqref{belief_proof_gamma_1}, the first equality is true due to the disintegration theorem \cite{Kallenberg} an the fact that $M_t^1$, $X_t^1$, and $W_t^1$ are conditionally independent given $H_{t}^0$ from Lemma \ref{lm:independence_of_state_mode}, and the second equality is true because of \eqref{eq:induction_nu} and Lemma \ref{lm:omega_equation}.
\end{itemize}
Hence, \eqref{nu_claim} holds at time $t+1$ and from \eqref{eq:theta_update},
we have $\theta_{t+1} = \psi_t(\theta_t, u_t^{prs}, x_t^0, m_t^0, \tilde z_t, z_{t+1})$. This completes the proof of Lemma \ref{lm:beliefupdate}.

\end{proof}

\section{Proof of Theorem \ref{thm:general_structure}}
\label{Proof_Thm_general_structure}
For any $\phi^{prs} \in \Phi^{prs}$ and any realization $h^0_t\in\mathcal{H}^0_t$, let the realization of the common belief $\Theta_t$ be
$\theta_t= \prob^{\phi^{prs}_{0:t-1}} (dx_t^1 \vert h_t^0)$ defined by Lemma \ref{lm:beliefupdate}.
Suppose the prescription strategy $\phi^{prs*} \in \Phi^{prs}$ achieves the minimum of \eqref{eq:DP_V} for $x_{t}^0, m_{t}^0, \theta_{t}, \tilde z_{t}$, $t=0,\ldots,T$, and let $u^{prs*}_t=(u^{0*}_t, \rho_t^*) = \phi^{prs*}(h_t^0)$ for any realization $h^0_t\in\mathcal{H}^0_t$.

We prove by induction that $V_{t}(x_{t}^0, m_{t}^0, \theta_{t}, \tilde z_{t})$ is a measurable function with respect to $h_t^0$, and for any $h_t^0 \in \mathcal{H}_t^0$ and for any $ \phi^{\diamond}_t := \phi^{prs}_{0:t-1} \in \Phi^{prs}$
we have

\vspace{-4mm}
\begin{small}
\begin{align}
 &\ee^{\phi'_t}  \left[\sum_{s=t}^T c_s^{prs}(X_s^{0:1}, M_s^{0:1}, U^{prs}_s) \middle| h_t^0 \right]
 \nonumber\\
&= V_{t}\big(
x_{t}^0, m_{t}^0, \prob^{\phi^{\diamond}_t} (dx_t^1 \vert h_t^0) , \tilde z_{t}\big)
\label{eq:Vinduction_part1}
\\
& \leq  \ee^{\phi^{prs}}\left[\sum_{s=t}^T c_s^{prs}(X_s^{0:1}, M_s^{0:1}, U^{prs}_s) \middle| h_t^0\right]
\label{eq:Vinduction_part2}
\end{align}
\end{small}
where $\phi'_t= \{\phi^{\diamond}_t,  \phi^{prs*}_{t:T}\}$.
Note that the above equation at $t=0$ gives the optimality of $\phi^{prs*}$ for Problem \ref{problem:equivalent}.

We first consider \eqref{eq:Vinduction_part1}.
At $T+1$, \eqref{eq:Vinduction_part1} is true (all terms are defined to be $0$ at $T+1$). 
Assume $V_{t+1}(x_{t+1}^0, m_{t+1}^0, \theta_{t+1}, \tilde z_{t+1})$
is a measurable function with respect to $h_{t+1}^0$ and
\eqref{eq:Vinduction_part1} is true at $t+1$, that is, for any $h_{t+1}^0 \in \mathcal{H}_{t+1}^0$ and for any $\phi^{\diamond}_{t+1} \in \Phi^{prs}$

\begin{small}
\begin{align}
 &\ee^{\phi'_{t+1}}  \left[\sum_{s=t+1}^T c_s^{prs}(X_s^{0:1}, M_s^{0:1}, U^{prs}_s) \middle| h_{t+1}^0 \right]
 \nonumber\\
&= V_{t+1}\Big(
x_{t+1}^0, m_{t+1}^0, \prob^{\phi^{\diamond}_{t+1}} (dx_{t+1}^1 \vert h_{t+1}^0) , \tilde z_{t+1}\Big)
\label{eq:induction_hypothesis_p1}
\end{align}
\end{small}
where $\phi'_{t+1}= \{\phi^{\diamond}_{t+1},  \phi^{prs*}_{t+1:T}\}$. Since \eqref{eq:induction_hypothesis_p1} is true for any $\phi^{\diamond}_{t+1} \in \Phi^{prs}$, by choosing $\phi^{prs}_t$ to be $\phi^{prs*}_t$, we get $\phi'_{t+1}= \{\phi^{\diamond}_{t+1},  \phi^{prs*}_{t+1:T}\} = 
\{\phi^{\diamond}_{t},  \phi^{prs*}_{t:T}\} =  \phi'_{t}$. Then, from \eqref{eq:induction_hypothesis_p1}, we get,
\begin{small}
\begin{align}
 &\ee^{\phi'_{t+1}}  \left[\sum_{s=t+1}^T c_s^{prs}(X_s^{0:1}, M_s^{0:1}, U^{prs}_s) \middle| h_{t+1}^0 \right]
 \nonumber\\
&= V_{t+1}\Big(x_{t+1}^0, m_{t+1}^0, \prob^{\phi^{\diamond}_{t}, \phi^{prs*}_{t}} (dx_{t+1}^1 \vert h_{t+1}^0) , \tilde z_{t+1}\Big).
\label{eq:induction_hypothesis_new_p1}
\end{align}
\end{small}

At time $t$, from the tower property of conditional expectation we have
\begin{small}
\begin{align}
&\ee^{\phi'_t}  \left[\sum_{s=t}^T c_s^{prs}(X_s^{0:1}, M_s^{0:1}, U^{prs}_s) \middle| h_t^0 \right]
\nonumber\\
&=\ee^{\phi'_t}\left[ c_t^{prs}(X_t^{0:1}, M_t^{0:1}, U^{prs}_t) \middle| h_t^0 \right] \nonumber\\
&+\ee^{\phi'_t}\left[\ee^{\phi'_t}\left[\sum_{s=t+1}^T c_s^{prs}(X_s^{0:1}, M_s^{0:1}, U^{prs}_s) \middle| H_{t+1}^0 \right]\middle| h_t^0 \right].
\label{eq:Vinduction_part1_zero}
\end{align}
\end{small}
Note that the first term in \eqref{eq:Vinduction_part1_zero} is equal to

\vspace{-4mm}
\begin{small}
\begin{align}
&\ee^{\phi'_t}\left[ c_t^{prs}(X_t^{0:1}, M_t^{0:1}, U^{prs}_t) \middle| h_t^0 \right] \notag \\
&=\sum_{m_t^1 \in \mathcal{M}^1}\int  c_t^{prs}(x_t^{0:1}, m_t^{0:1}, u_t^{prs*}) \theta_t(dx_t^1) \omega^{\tilde z_{t}}(m_t^1) \label{eq:Vinduction_part1_first_1}  \\
& =:\ee \Big[ c_t^{prs}(x_t^0, X_t^{1}, m_t^0, M_t^{1}, u^{prs}_t)  \Big| x_t^0, m_t^0, \theta_t, \tilde z_t, u_t^{prs*} \Big]
\label{eq:Vinduction_part1_first} 
\end{align}
\end{small}
where the first equality is true because of Lemma \ref{lm:independence_of_state_mode}. Furthermore, we can write \eqref{eq:Vinduction_part1_first_1} as \eqref{eq:Vinduction_part1_first} because of Lemma \ref{lm:independence_of_state_mode} where in \eqref{eq:Vinduction_part1_first}, $X_t^1$ is a random vector distributed according to $\theta_t$.

According to \eqref{eq:induction_hypothesis_new_p1}, the second term in \eqref{eq:Vinduction_part1_zero} can be written as

\vspace{-4mm}
\begin{small}
\begin{align}
&\hspace{0mm} \ee^{\phi'_t}\left[
V_{t+1}\big(X_{t+1}^0, M_{t+1}^0, \prob^{\phi^{\diamond}_{t}, \phi^{prs*}_{t}} (dx_{t+1}^1 \vert H_{t+1}^0) , \tilde Z_{t+1}\big)
\middle| h_t^0 \right] \notag \\
&\hspace{0mm} =\ee^{\phi'_t}\left[
V_{t+1}\big(X_{t+1}^0, M_{t+1}^0, \psi_t^{\circ*}(Z_{t+1}) , \tilde Z_{t+1}\big)
\middle| h_t^0 \right]
\notag  \\
 &\hspace{0mm}= 
 \sum_{\gamma_{t+1} \in \{0,1\}} 
 \ee^{\phi'_t}\Big[
V_{t+1}\big(X_{t+1}^0, M_{t+1}^0, \psi_t^{\circ*}(Z_{t+1}) , \tilde Z_{t+1}\big)
\notag \\
& \hspace{4cm} \Big\vert h_t^0, \Gamma_{t+1}=\gamma_{t+1} \Big] p(\gamma_{t+1}),
\label{eq:Vinduction_part1_second_p1} 
\end{align}
\end{small}
where we have defined 
\begin{small}
\begin{align}
\psi_t^{\circ *}(Z_{t+1}) := \psi_t(\theta_t, u_t^{prs*}, x_t^0, m_t^0, \tilde z_t, Z_{t+1}).
\label{psi_circ}
\end{align}
\end{small}
Note that 
\begin{small}
\begin{align}
&
 \ee^{\phi'_t}\Big[
V_{t+1}\big(X_{t+1}^0, M_{t+1}^0, \psi_t^{\circ *}(\emptyset), \emptyset\big)
\Big\vert h_t^0, \Gamma_{t+1}= 0 \Big] = \sum_{m_{t+1}^0 \in \mathcal{M}^0}
\notag \\
& \hspace{5mm} \int V_{t+1} \Big(x_{t+1}^0, m_{t+1}^0, \alpha_t^{1*}, \emptyset \Big)  \alpha_t^{0*}(dx_{t+1}^0) \pi_{M^0}(m_{t+1}^0).
\label{eq:Vinduction_part1_second_p2} 
\end{align}
\end{small}
In \eqref{eq:Vinduction_part1_second_p2}, $\alpha_t^{1*} = \psi_t^{\circ}(\emptyset)$ and $\alpha_t^{0*} = \tilde{\psi}_t(x_t^0, m_t^0, u_t^{0*})$ where for any $E \subset \R^{d_X^0}$ we have 
\begin{align}
&[\tilde{\psi}_t(x_t^0, m_t^0, u_t^0)] (E):=  \prob^{\phi^{\diamond}_t}(X_{t+1}^0 \in E | h^0_{t}) \notag \\
&= \prob \big(A^{00}(m_t^0) x_t^0 + B^{00}(m_t^0) u_t^0  + W_t^0 \in E \big)
\label{alpha_0}
\end{align}
and the last equality of \eqref{eq:Vinduction_part1_second_p2} is true because $\prob^{\phi'_t} (dx_{t+1}^1 \vert h_t^0, \Gamma_{t+1}=0) = \alpha_t^{1*}(dx_{t+1}^1)$ from Lemma \ref{lm:beliefupdate}. Furthermore,

\vspace{-4mm}
\begin{small}
\begin{align}
&\hspace{0mm}
 \ee^{\phi'_t}\Big[
V_{t+1}\big(X_{t+1}^0, M_{t+1}^0, \psi_t^{\circ *}(X_{t+1}^1), M_{t+1}^1\big)
\Big\vert h_t^0, \Gamma_{t+1}= 1 \Big]  \notag \\
&\hspace{0mm} =
 \ee^{\phi'_t}\Big[
V_{t+1}\big(X_{t+1}^0, M_{t+1}^0, \psi_t^{\circ *}(X_{t+1}^1), M_{t+1}^1\big)
\Big\vert h_t^0, \Gamma_{t+1}= 0 \Big]  \notag \\
&= \sum_{
\substack{m_{t+1}^0 \in \mathcal{M}^0\\ m_{t+1}^1 \in \mathcal{M}^1}}
\int \int  V_{t+1} \Big(x_{t+1}^0, m_{t+1}^0, \delta(x_{t+1}^1), m_{t+1}^1 \Big)
 \notag \\
& \hspace{20mm} 
 \times \prod_{n\in \{0,1\}}\alpha_t^{n*}(dx_{t+1}^{n}) \pi_{M^n}(m_{t+1}^n),
\label{eq:Vinduction_part1_second_p3} 
\end{align}
\end{small}
The first equality in \eqref{eq:Vinduction_part1_second_p3} is true because $X_{t+1}^{1:2}$ and $M_{t+1}^{1:2}$ are independent of $\Gamma_{t+1}$.

Now, by combining \eqref{eq:Vinduction_part1_second_p2} and \eqref{eq:Vinduction_part1_second_p3},
 \eqref{eq:Vinduction_part1_second_p1}can be written as

\begin{small}
\begin{align}
&\ee^{\phi'_t}\left[
V_{t+1}\big(X_{t+1}^0, M_{t+1}^0, \psi_t^{\circ*}(Z_{t+1}) , \tilde Z_{t+1}\big)
\middle| h_t^0 \right] \notag \\
& = \ee \Big[V_{t+1}\big(X_{t+1}^0, M_{t+1}^0, \psi_t^{\circ*}(Z_{t+1}), \tilde Z_{t+1}\big)  \Big| x_t^0, m_t^0, \theta_t, \tilde z_t, u_t^{prs*} \Big].
\label{eq:Vinduction_part1_second_p4} 
\end{align}
\end{small}


Now, from \eqref{eq:Vinduction_part1_first} and \eqref{eq:Vinduction_part1_second_p4}, the right hand side of \eqref{eq:Vinduction_part1_zero} is $V_t(x_{t}^0, m_{t}^0, \theta_{t}, \tilde z_{t}) $ from the definition of the value function \eqref{eq:DP_V}.
Hence, \eqref{eq:Vinduction_part1} is true at time $t$. The measurability  of $V_t(x_{t}^0, m_{t}^0, \theta_t, \tilde z_{t}) $ with respect to $h_t^0$ is also resulted from the fact that 
$V_{t}(x_{t}^0, m_{t}^0, \prob^{\phi^{\diamond}_t} (dx_t^1 \vert h_t^0) , \tilde z_{t})$ is equal to the conditional expectation $\ee^{\phi'_t}\left[\sum_{s=t}^T c_s^{prs}(X_s^{0:1}, M_s^{0:1}, U^{prs}_s) \middle| h^0_t\right]$ which is measurable with respect to $h^0_t$.

Now let's consider \eqref{eq:Vinduction_part2}.
At $T+1$, \eqref{eq:Vinduction_part2} is true (all terms are defined to be $0$ at $T+1$). 
Assume \eqref{eq:Vinduction_part2} is true at $t+1$.
Let $u^{prs}_t=(u^{0}_t,\rho_t) = \phi^{prs}(h^0_t)$. 
Following an argument similar to that of \eqref{eq:Vinduction_part1_zero}-\eqref{eq:Vinduction_part1_second_p3},

\begin{small}
\begin{align}
&\ee^{\phi^{prs}}\left[\sum_{s=t}^T c_s(X^{0:N}_s,U^{0:N}_s)  \middle| h_t^0 \right] \geq
\notag\\
&\ee \Big[ c_t^{prs}(x_t^0, X_t^{1}, m_t^0, M_t^{1}, u^{prs}_t)  \Big| x_t^0, m_t^0, \theta_t, \tilde z_t, u_t^{prs} \Big] \notag \\
&+  \ee \Big[V_{t+1}\big(X_{t+1}^0, M_{t+1}^0, \psi_t^{\circ}(Z_{t+1}), \tilde Z_{t+1}\big)  \Big| x_t^0, m_t^0, \theta_t, \tilde z_t, u_t^{prs} \Big]
\notag\\
&\geq V_t(x_{t}^0, m_{t}^0, \theta_{t}, \tilde z_{t}).
\label{eq:Vinduction_part2_proof}
\end{align}
\end{small}
where the last inequality follows from the definition of the value function \eqref{eq:DP_V}.
This completes the proof of the induction step, and the proof of the theorem.

\section{Proof of Lemma \ref{decomposition_equivalence}}
\label{Proof_lemma_decomposition}
To show \eqref{set_equivalence}, let $\bar{\mathcal{P}}$ denote the set $\bar{\mathcal{P}} = \{\bar q_t \circ h_2 + \tilde q_t:  \bar q_t \in \bar{\mathcal{Q}}, \tilde q_t \in \tilde{\mathcal{Q}}(\theta_t)  \}$. Then, we want to show that for any $\theta_t \in \Delta(\R^{d_X^1})$, $\mathcal{\bar P} =\mathcal{P}$ where $ \mathcal{P}=\{\rho: \R^{d_X^1} \times \mathcal{M}^1 \to \R^{d_U^1} \text{ measurable} \}$. 

First, assume $\rho_t \in \mathcal{P}$. For any $\theta_t \in \Delta(\R^{d_X^1})$, define 
$\bar q_t(\cdot) = \int \rho_t (x_t^1,\cdot) \theta_t(dx_t^1)$ and 
$\tilde q_t = \rho_t - \bar q_t \circ h_2$. Then, we have $\rho_t = \bar q_t \circ h_2 + \tilde q_t$.
Note that $\bar q_t \in \mathcal{\bar Q}$ and since $\rho_t$ is measurable, $\tilde q_t$ is measurable. Furthermore, $\int \tilde q_t(x_t^1,m_t^1) \theta_t(dx_t^1) =  \int \rho_t (x_t^1,m_t^1) \theta_t(dx_t^1) -  \int \rho_t (x_t^1,m_t^1) \theta_t(dx_t^1) = 0$ for any $m_t^1 \in \mathcal{M}^1$. Hence,
$\tilde q_t \in \mathcal{\tilde Q}(\theta_t)$. This concludes that $\rho_t \in \mathcal{\bar P}$.

For the reverse direction, assume that $\rho_t \in \mathcal{\bar P}$. Then, it can be written as $\rho_t = \bar q_t \circ h_2 + \tilde q_t$ where  $\bar q_t \in \bar{\mathcal{Q}}$ and $\tilde q_t \in \tilde{\mathcal{Q}}(\theta_t)$. Since, $\bar q_t$ and $\tilde q_t$ are measurable, $\rho_t$ is a measurable function from $\R^{d_X^1} \times \mathcal{M}^1$ to $\R^{d_U^1}$ and hence, $\rho_t \in \mathcal{P}$. This completes the proof.

\section{Proof of Theorem \ref{thm:Sol_mode}}
\label{Proof_theorem_Sol_mode}
The proof is done by induction. 
\begin{itemize}
\item At time $T+1$:
\end{itemize}
It is clear that \eqref{eq:Vt_mode} is true because $P_{T+1}(m_{T+1}^0,\tilde z_{T+1}) =  \tilde{P}_{T+1}(m_{T+1}^0,\tilde z_{T+1}) = \mathbf{0}$ for any $m_{T+1}^0 \in \mathcal{M}^0$, and $\tilde z_{T+1} \in \tilde{\mathcal{M}}^1$, and $e_{T+1}=0$.
\begin{itemize}
\item At time $t+1$:
\end{itemize}
Suppose \eqref{eq:Vt_mode} is true, that is, for any $x_t^0 \in \R^{d_X^0}$, $m_t^0 \in \mathcal{M}^0$, $\theta_t \in \Delta(\R^{d_X^1})$, and $\tilde z_t \in \tilde{\mathcal{M}}^1$,
\begin{small}
\begin{align}
V_{t+1}(&x_{t+1}^0, m_{t+1}^0, \theta_{t+1}, \tilde z_{t+1}) =  \notag \\
& QF \Big(P_{t+1}(m_{t+1}^0,\tilde z_{t+1}), \vecc \big(x^0_{t+1},\mu(\theta_{t+1}) \big) \Big) \notag \\
& + \tr \Big( \tilde{P}_{t+1}(m_{t+1}^0,\tilde z_{t+1}) \cov(\theta_{t+1}) \Big) + e_{t+1},
\label{eq:V_t_1}
\end{align}
\end{small}
and the matrices $P_{t+1}(m_{t+1}^0,\tilde z_{t+1})$ and $\tilde{P}_{t+1}(m_{t+1}^0,\tilde z_{t+1})$ are all positive semi-definite (PSD) for any $m_{t+1}^0 \in \mathcal{M}^0$, and $\tilde z_{t+1} \in \tilde{\mathcal{M}}^1$.

\begin{itemize}
\item At time $t$:
\end{itemize}

Let's now compute the value function at $t$ given by \eqref{eq:DP_V} in Theorem \ref{thm:general_structure}. In order to do so, we need to calculate 

\begin{small}
\begin{align}
&\mathbb{T} =  
\underbrace{
\ee \Big[ c_t^{prs}(x_t^0, X_t^{1}, m_t^0, M_t^{1}, u^{prs}_t)  \Big| x_t^0, m_t^0, \theta_t, \tilde z_t, u_t^{prs} \Big] }
_{\mathbb{T}_1(\tilde z_t)}
\notag \\
&+  \underbrace{
\ee \Big[V_{t+1}\big(X_{t+1}^0, M_{t+1}^0, \psi_t^{\circ}(Z_{t+1}), \tilde Z_{t+1}\big)  \Big| x_t^0, m_t^0, \theta_t, \tilde z_t, u_t^{prs} \Big]
}
_{\mathbb{T}_2(\tilde z_t)}
\end{align}
\end{small}
where the first and second term are as defined in \eqref{eq:Vinduction_part1_first} and \eqref{eq:Vinduction_part1_second_p4}, respectively and $\psi_t^{\circ}(Z_{t+1}) = \psi_t(\theta_t, u_t^{prs}, x_t^0, m_t^0, \tilde z_t, Z_{t+1})$.

To this end, we consider two following cases.
\subsection{$\tilde z_t = \emptyset$}
This corresponds to the case that $\gamma_t =0$ from \eqref{Model:channel}. In the following we calculate $\mathbb{T}_1(\emptyset)$ and $\mathbb{T}_2(\emptyset)$.

\underline{\textbf{Calculating $\mathbb{T}_1(\emptyset)$:}}

Let $S_t^{\theta_t,m_t^1} := \vecc(x_t^0, X^{\theta_t},u^0_t, \bar q_t(m_t^1)+ \tilde q_t(X^{\theta_t},m_t^1) )$ where $X^{\theta_t}$ is a random vector with distribution $\theta_t$.
Then, according to \eqref{cost_structure} and \eqref{cost_of_new_problem},

\vspace{-4mm}
\begin{small}
\begin{align}
\mathbb{T}_1(\emptyset) &= 
\ee \Big[ c_t^{prs}(x_t^0, X_t^{1}, m_t^0, M_t^{1}, u^{prs}_t)  \Big| x_t^0, m_t^0, \theta_t, \emptyset, u_t^{prs} \Big]  \notag \\
&=\sum_{m_t^1 \in \mathcal{M}^1}\int  c_t^{prs}(x_t^{0:1}, m_t^{0:1}, u_t^{prs}) \theta_t(dx_t^1) \pi_{M^1}(m_t^1)
\notag \\
&=
\sum_{m_t^1 \in \mathcal{M}^1} \ee \Big[c_t^{prs}(x_t^0, X^{\theta_t}, m_t^{0:1}, u_t^{prs}) \Big] \pi_{M^1}(m_t^1)
\notag \\
&=  \sum_{m_t^1 \in \mathcal{M}^1} \ee \Big[QF\left(C_t(m_t^{0:1}), S_t^{\theta_t,m_t^1} \right) \Big]
\pi_{M^1}(m_t^1) \notag \\
&=  \sum_{m_t^1 \in \mathcal{M}^1} \Big[ 
QF\left(C_t(m_t^{0:1}), \ee[S_t^{\theta_t,m_t^1}]\right) \notag \\
&+  \tr\left(C_t(m_t^{0:1})\cov(S_t^{\theta_t,m_t^1}) \right)
 \Big] \pi_{M^1}(m_t^1),
\label{eq:Vterm_inst_cost}
\end{align}
\end{small}
where the second equality is true because of \eqref{eq:Vinduction_part1_first}.

Note that in the minimization of Theorem \ref{thm:structure}, we are looking for only $\tilde q_t \in \mathcal{\tilde{Q}}(\theta_t)$. Hence, according to \eqref{Q_set}, $\ee[\tilde q_t (X^{\theta_t},m_t^1)] =0$ and

\vspace{-4mm}
\begin{small}
\begin{align}
\ee[S_t^{\theta_t,m_t^1}] &= \vecc(x_t^0, \mu(\theta_t),u^0_t, \bar q_t(m_t^1))
\label{S_t_m_mean} \\
\cov(S_t^{\theta_t,m_t^1}) &= \cov \big(\vecc(0, X^{\theta_t},0, \tilde q_t(X^{\theta_t},m_t^1) ) \big).
\label{S_t_m_cov} 
\end{align}
\end{small}
Let $\bar S_t^{\theta_t} = \vecc(x_t^0, \mu(\theta_t),u^0_t, \bar q_t(1), \ldots, \bar q_t(\kappa^1))$. Then, 

\vspace{-4mm}
\begin{small}
\begin{align}
\ee[S_t^{\theta_t,m_t^1}] = L_{m_t^1}\bar S_t^{\theta_t},
\label{S_to_bar_S}
\end{align}
\end{small}
where $L_{m_t^1}$ is as defined in \eqref{Lm_matrix}.

Furthermore, according to \eqref{S_t_m_cov}, we can write 

\vspace{-4mm}
\begin{small}
\begin{align}
&\tr\left(C_t(m_t^{0:1})\cov(S_t^{\theta_t,m_t^1}) \right)
= \tr\left(C_t^{11} (m_t^{0:1}) \cov(\tilde S_t^{\theta_t,m_t^1}) \right),
\label{2nd_T_1}
\end{align}
\end{small}
where we defined $\tilde S_t^{\theta_t,m_t^1} := \vecc(X^{\theta_t}, \tilde q_t(X^{\theta_t},m_t^1))$.
Now, using \eqref{S_to_bar_S} and \eqref{2nd_T_1}, \eqref{eq:Vterm_inst_cost} can be written as,
\begin{small}
\begin{align}
\mathbb{T}_1(\emptyset)&=  
QF\left(C_t^{\emptyset}(m_t^{0}), \bar S_t^{\theta_t}\right) \notag \\
&+ 
\sum_{m_t^1 \in \mathcal{M}^1} \tr\left(C_t^{11} (m_t^{0:1}) \cov(\tilde S_t^{\theta_t,m_t^1}) \right)
 \pi_{M^1}(m_t^1).
\label{T_1_empty}
\end{align}
\end{small}

\underline{\textbf{Calculating $\mathbb{T}_2(\emptyset)$:}}

We first calculate $\mathbb{T}_2(\tilde z_t) $ for any $\tilde z_t \in \mathcal{\bar M}^1$, and then simplify it for the case $\tilde z_t = \emptyset$. According to \eqref{eq:Vinduction_part1_second_p4},
\begin{small}
\begin{align}
&\mathbb{T}_2(\tilde z_t) = \sum_{\gamma_{t+1}\in\{0,1\}}p(\gamma_{t+1})
 \sum_{
\substack{m_{t+1}^0 \in \mathcal{M}^0\\ m_{t+1}^1 \in \mathcal{M}^1}}
 \notag \\
& \hspace{1.5cm} \int \int  \underbrace{V_{t+1} \Big(x_{t+1}^0, m_{t+1}^0, \textit{NB}(\gamma_{t+1},\alpha_t^1,x_{t+1}^1), \tilde z_{t+1} \Big)}_{\mathbb{T}_3}
 \notag \\
& \hspace{1.5cm}\times \prod_{n\in \{0,1\}}\alpha_t^n(dx_{t+1}^n) \pi_{M^n}(m_{t+1}^n).
\label{T_2_empty}
\end{align}
\end{small}
where $\alpha_t^1 = \psi_t^{\circ}(\emptyset) =\psi_t(\theta_t, u_t^{prs}, x_t^0, m_t^0, \tilde z_t, \emptyset)$, $\alpha_t^0 =\tilde{\psi}_t(x_t^0, m_t^0, u_t^0)$ as defined in \eqref{alpha_0}, and we defined $\textit{NB}(\gamma_{t+1}, \alpha_t^1, x_{t+1}^1):= (1-\gamma_{t+1})\alpha_t^1 + \gamma_{t+1} \delta(x_{t+1}^1)$. Further, we can write $x_{t+1}^0 = \mu(\alpha_t^0) + \big (x_{t+1}^0 - \mu(\alpha_t^0)  \big)$.
Considering these and after some algebra, we can get, 

\vspace{-4mm}
\begin{small}
\begin{align}
&
\int \int
 \mathbb{T}_3 
 \prod_{n\in \{0,1\}}\alpha_t^n(dx_{t+1}^n) 
  \notag \\
&=
\gamma_{t+1}QF\Big(P_{t+1}(m_{t+1}^0, m_{t+1}^1), \vecc \big(\mu(\alpha_t^0), \mu(\alpha_t^1)  \big)  \Big) \notag \\
&\hspace{0cm}+ (1-\gamma_{t+1})QF\Big(P_{t+1}(m_{t+1}^0, \emptyset), \vecc \big(\mu(\alpha_t^0), \mu(\alpha_t^1)  \big)  \Big) \notag \\
& \hspace{0cm}+ \gamma_{t+1} \tr \big(P_{t+1}^{00}(m_{t+1}^0,m_{t+1}^1) \cov(\alpha_t^0) \big) \notag \\
&\hspace{0cm}+ (1-\gamma_{t+1}) \tr \big(P_{t+1}^{00}(m_{t+1}^0,\emptyset) \cov(\alpha_t^0) \big)
 \notag \\
& \hspace{0cm} + \gamma_{t+1} \tr \big(P_{t+1}^{11}(m_{t+1}^0,m_{t+1}) \cov(\alpha_t^1) \big) \notag \\
&\hspace{0cm} +  (1-\gamma_{t+1}) \tr \big(\tilde P_{t+1}(m_{t+1}^0,\emptyset) \cov(\alpha_t^1) \big) + e_{t+1}.
\label{eq:Vterm_4_mode}
\end{align}
\end{small}
Note that $\alpha_t^1$ depends on $\tilde z_t$ and $\theta_t$. To make this dependence apparent, we write $\alpha_t^1$ as $\psi_t^{\triangleleft}(\tilde z_t, \theta_t)$. Then, from \eqref{eq:Vterm_4_mode} and using operators $\Pi$ and $\Psi$ defined in \eqref{operator_Pi} and \eqref{operator_Psi}, \eqref{T_2_empty} can be written as

\vspace{-4mm}
\begin{small}
\begin{align}
&\mathbb{T}_2 (\tilde z_t)= 
 \underbrace{QF\Big( \Pi(P_{t+1}), \vecc \big(\mu(\alpha_t^0), \mu(\psi_t^{\triangleleft}(\tilde z_t, \theta_t))  \big)  \Big)
 }_{\mathbb{T}_4(\tilde z_t)} \notag \\
& +\underbrace{ \tr \big(\Pi(P_{t+1}^{00}) \cov(\alpha_t^0)\big )}_{\mathbb{T}_5}
  \notag \\
& + \underbrace{ \tr \Big( \Psi(\tilde P_{t+1}, P_{t+1}^{11})\cov(\psi_t^{\triangleleft}(\tilde z_t, \theta_t))\Big)}_{\mathbb{T}_6(\tilde z_t)} +e_{t+1}.
\label{eq:Vterm_5_mode}
\end{align}
\end{small}

Now, to calculate $\mathbb{T}_2 (\emptyset)$ from \eqref{eq:Vterm_5_mode}, we need to calculate mean and covariance of $\alpha_t^0$ and $\psi_t^{\triangleleft}(\emptyset, \theta_t)$.

\underline{Calculating mean and covariance of $\alpha_t^0$ and $\psi_t^{\triangleleft}(\emptyset, \theta_t)$:}

Let $S_t^{\theta_t,M_t^1} := \vecc(x_t^0, X^{\theta_t},u^0_t, \bar q_t(M_t^1)+ \tilde q_t(X^{\theta_t},M_t^1) )$ where $X^{\theta_t}$ is a random vector independent of $M^1_t$ and with distribution $\theta_t$. Then, we define $Y_t^{\theta_t,M_t^1}: = [D(m_t^0,M_t^1)]_{1\bullet} S_t^{\theta_t,M_t^1} + W_t^1$ and $\tilde Y_t: = A^{00}(m_t^0) x_t^0 + B^{00}(m_t^0) u_t^0  + W_t^0$. From \eqref{eq:psit} in Lemma \ref{lm:beliefupdate}, we know that $Y_t^{\theta_t,M_t^1}$ has distribution $\psi_t^{\triangleleft}(\emptyset, \theta_t)$, and furthermore, from \eqref{alpha_0} we know that $\tilde Y_t$ has distribution $\alpha_t^0$. Then, using the fact that from  \eqref{S_to_bar_S}, we can write $\ee[S_t^{\theta_t,m_t^1}] = L_{m_t^1}\bar S_t^{\theta_t}$, we have

\vspace{-4mm}
\begin{small}
\begin{align}
\label{mean_alpha_0}
\mu(\alpha_t^0) &= \ee[\tilde Y_t]  = A^{00}(m_t^0) x_t^0 + B^{00}(m_t^0) u_t^0  \notag \\
&= \hspace{-2mm} \sum_{m_t^1 \in \mathcal{M}^1} 
[D(m_t^{0:1})]_{0\bullet} L_{m_t^1}\bar S_t^{\theta_t} \pi_{M^1}(m_t^1),  \\
\label{cov_alpha_0}
\cov(\alpha_t^0) &= \cov(\tilde Y_t) = \cov(W_t^0),  \\
\mu(\psi_t^{\triangleleft}(\emptyset, \theta_t)) &= \ee[Y_t^{\theta_t,M_t^1}]  \notag \\
&= 
\hspace{-2mm}
\sum_{m_t^1 \in \mathcal{M}^1} 
[D(m_t^{0:1})]_{1\bullet} L_{m_t^1}\bar S_t^{\theta_t} \pi_{M^1}(m_t^1),   \\
\cov(\psi_t^{\triangleleft}(\emptyset, \theta_t)) &= \cov(Y_t^{\theta_t,M_t^1}) = \ee[\cov(Y_t^{\theta_t,M_t^1}\vert M_t^1)] \notag \\
& \hspace{22mm} + \cov(\ee[Y_t^{\theta_t,M_t^1}\vert M_t^1]),
\label{cov_alpha_1}
\end{align}
\end{small}
where the last equality of \eqref{mean_alpha_0} is true because
$[D(m_t^{0:1})]_{0\bullet} = \begin{bmatrix}
A^{00}(m_t^0) & \mathbf{0} & B^{00}(m_t^0) & \mathbf{0}
\end{bmatrix}$ and further, \eqref{cov_alpha_1} is true because of "Law of total variance" and
\begin{small}
\begin{align}
\label{cov_alpha_1_1}
&\ee[\cov(Y_t^{\theta_t,M_t^1}\vert M_t^1)]  = 
\sum_{m_t^1 \in \mathcal{M}^1} 
\hspace{-2mm}
\cov(Y_t^{\theta_t,m_t^1}\vert m_t^1)
\pi_{M^1}(m_t^1)=
\notag \\
& \hspace{-1mm} \sum_{m_t^1 \in \mathcal{M}^1}
\hspace{-2mm}
D^{11}(m_t^{0:1})\cov(\tilde S_t^{\theta_t,m_t^1}) D^{11}(m_t^{0:1})^{\tp}
\pi_{M^1}(m_t^1)+ \cov(W_t^1), \\
&\cov(\ee[Y_t^{\theta_t,M_t^1}\vert M_t^1]) = 
\cov \big([D(m_t^0,M_t^1)]_{1\bullet} \ee[S_t^{\theta_t,M_t^1}\vert M_t^1]\big) =
\notag \\
& \sum_{\substack{m_t^1 \in \mathcal{M}^1}}
[D(m_t^{0:1}) ]_{1\bullet} L_{m_t^1}\bar S_t^{\theta_t} (L_{m_t^1}\bar S_t^{\theta_t})^{\tp}  [D(m_t^{0:1}) ]_{1\bullet}^{\tp} \pi_{M^1}(m_t^1)
\notag \\
&- [\tilde D^{\emptyset} (m_t^0)]_{1\bullet} [\tilde D^{\emptyset} (m_t^0)]_{1\bullet}^{\tp},
\label{cov_alpha_1_2}
\end{align}
\end{small}
where we defined $\tilde D^{\emptyset} (m_t^0) = \sum_{\substack{m_t^1 \in \mathcal{M}^1}}
[D(m_t^{0:1}) ]_{1\bullet} L_{m_t^1}\bar S_t^{\theta_t} \pi_{M^1}(m_t^1)$.

\vspace{4mm}
Now that we have calculated the mean and covariance of $\alpha_t^0$ and $\psi_t^{\triangleleft}(\emptyset, \theta_t)$ in \eqref{mean_alpha_0}-\eqref{cov_alpha_1_2}, using the matrices $E^{\emptyset}_t, F^{\emptyset}_t$ defined in \eqref{eq:recursion_E_t} and \eqref{eq:recursion_F_t}, $\mathbb{T}_4(\emptyset) $, $\mathbb{T}_5$, and $\mathbb{T}_6(\emptyset) $ can be respectively written as follows:

\vspace{-4mm}
\begin{small}
\begin{align}
\mathbb{T}_4(\emptyset) &= QF \big( E^{\emptyset}_t(m_t^0), \bar S_t^{\theta_t} \big),
 \label{T_3} \\
\mathbb{T}_5 &=   \tr \Big(\Pi(P_{t+1}^{00}) \cov(W_t^0)\Big ),
  \label{T_4} \\
 \mathbb{T}_6(\emptyset)&= \sum_{m_t^1 \in \mathcal{M}^1} \tr \Big( G_t(m_t^{0:1})  \cov(\tilde S_t^{\theta_t,m_t^1}) \Big) 
\pi_{M^1}(m_t^1)  
 \notag \\
& + QF \Big( F_t^{\emptyset}(m_t^0), \bar S_t^{\theta_t} \Big) + \tr \Big( \Psi(\tilde P_{t+1}, P_{t+1}^{11}) \cov(W_t^1)\Big).
 \label{T_5}
\end{align}
\end{small}
where we have defined $G_t(m_t^{0:1}) = D^{11}(m^{0:1})^{\tp} \Psi(\tilde P_{t+1}, P_{t+1}^{11}) D^{11}(m^{0:1})$. Now, considering \eqref{T_3}-\eqref{T_5}, \eqref{eq:Vterm_5_mode} can be written as

\begin{small}
\begin{align}
&\mathbb{T}_2 (\emptyset) = QF \Big(E_t^{\emptyset}(m_t^0) + F_t^{\emptyset}(m_t^0), \bar S_t^{\theta_t} \Big) + \notag \\ 
&\sum_{m_t^1 \in \mathcal{M}^1} \tr \Big( G_t(m_t^{0:1})  \cov(\tilde S_t^{\theta_t,m_t^1}) \Big) \pi_{M^1}(m_t^1)    + e_t,
\label{T_6_empty}
\end{align}
\end{small}
where
\begin{small}
\begin{align}
e_t &= e_{t+1} + \tr \Big(\Pi(P_{t+1}^{00}) \cov(W_t^0)\Big ) \notag \\
&+ \tr \Big( \Psi(\tilde P_{t+1}, P_{t+1}^{11}) \cov(W_t^1)\Big).
\label{e_t}
\end{align}
\end{small}

\underline{\textbf{Calculating $\mathbb{T}(\emptyset)$:}}

Now that we have calculated $\mathbb{T}_1(\emptyset)$ and $\mathbb{T}_2(\emptyset)$, we use them to simplify $\mathbb{T}(\emptyset)$, and then by substituting it into the value function of \eqref{eq:DP_V} we get

\vspace{-4mm}
\begin{small}
\begin{align}
&V_{t}(x_{t}^0, m_{t}^0, \theta_{t}, \tilde z_{t})
=e_t + 
\min_{u^0_t \in \R^{d_U^0}, \bar q_t \in \bar{\mathcal{Q}}, \tilde q_t \in \tilde{\mathcal{Q}}(\theta_t) } 
\hspace{-2pt} \Big\lbrace \notag \\
&\hspace{1cm}   QF\left(H_t(m_t^{0},\emptyset), \bar S_t^{\theta_t}\right)
\nonumber\\
&\hspace{1cm} +  \sum_{m_t^1 \in \mathcal{M}^1} \tr\left(\tilde H_t(m_t^{0:1}) \cov(\tilde S_t^{\theta_t,m_t^1}) \right)
 \pi_{M^1}(m_t^1)
\Big\}.
\label{eq:DP_V_step1}
\end{align}
\end{small}
Note that $\bar S_t^{\theta_t} = \vecc(x_t^0, \mu(\theta_t),u^0_t, \bar q_t(1), \ldots, \bar q_t(\kappa^1))$ depends only on $u^0_t$ and $\bar q_t$. Furthermore, we have $\tilde S_t^{\theta_t,m_t^1} = \vecc(X^{\theta_t}, \tilde q_n(X^{\theta_t},m_t^1))$, and hence $\cov(\tilde S_t^{\theta_t,m_t^1})$ depends only on the choice of $\tilde q_t(\cdot,m_t^1)$. Consequently, in order to solve the optimization problem in the \eqref{eq:DP_V_step1}, we need to solve the two optimization problems

\vspace{-4mm}
\begin{small}
\begin{align}
 &\min_{u^0_t \in \R^{d_U^0}, \bar q_t \in \bar{\mathcal{Q}}} 
QF \Big(H_t(m_t^0,\emptyset), \bar S_t^{\theta_t} \Big),
\label{eq:DP_V_P1}
\\
&\min_{\tilde q_t \in \tilde{\mathcal{Q}}(\theta_t) }
\sum_{m_t^1 \in \mathcal{M}^1} \tr \Big(\pi_{M^1}(m_t^1)\tilde H_t(m_t^{0:1}) \cov(\tilde S_t^{\theta_t,m_t^1}) \Big)
\label{eq:DP_V_P2}
\end{align}
\end{small}
Since $H_t(m_t^0,\tilde z_t)$ is PD, it follows from \cite[Lemma 4]{Ouyang_Asghari_Nayyar:2016} that the optimal solution of \eqref{eq:DP_V_P1} is given by \eqref{eq:opt_ubar_mode_empty} and

\vspace{-4mm}
\begin{small}
\begin{align}
\label{eq:DP_V_part1}
&\min_{
\substack{u^0_t \in \R^{d_U^0}, \bar q_t \in \bar{\mathcal{Q}}}}
QF \Big(H_t(m_t^0,\emptyset), \bar S_t^{\theta_t} \Big) \notag \\
&= 
 QF \Big(P_t(m_t^0,\emptyset), \vecc \big(x^0_t,\mu(\theta_t) \big) \Big).
\end{align}
\end{small}
Furthermore, $P_t(m^0,\emptyset)$ is PSD because it is Schur complement of $H_t(m^0,\tilde z)$ with  respect to $H_t^{UU}(m^0,\emptyset)$. Similarly, since $  \tilde H_t(m_t^{0:1})$ is also PD, from \cite[Lemma 4]{Ouyang_Asghari_Nayyar:2016}, the optimal solution of \eqref{eq:DP_V_P2} is given by \eqref{eq:opt_gamma_mode} and

\vspace{-4mm}
\begin{small}
\begin{align}
& \min_{\tilde q_t \in \tilde{\mathcal{Q}}(\theta_t) }
\sum_{m_t^1 \in \mathcal{M}^1} \tr \Big( \pi_{M^1}(m_t^1)\tilde H_t(m_t^{0:1}) \cov(\tilde S_t^{\theta_t,m_t^1}) \Big)
 \notag \\
 & \sum_{m_t^1 \in \mathcal{M}^1}
\min_{\substack{\tilde q_t(\cdot,m_t^1): \\ \int \tilde q_t(x_t^1,m_t^1) \theta_t(dx_t^1) =0}} 
\hspace{-8mm}
 \tr \Big( \pi_{M^1}(m_t^1)  \tilde H_t(m_t^{0:1}) \cov(\tilde S_t^{\theta_t,m_t^1}) \Big)
 \notag \\
&= \tr \Big( \tilde{P}_t(m_t^0,\emptyset) \cov(\theta_t) \Big).
\label{eq:DP_V_part2}
\end{align}
\end{small}
Note that from \eqref{eq:opt_gamma_mode}, we have $\int \tilde q_t(x_t^1,m_t^1) \theta_t(dx_t^1) =0$ for all $m_t^1 \in \mathcal{M}^1$ and hence, $\tilde q_t \in \mathcal{\tilde Q}(\theta_t)$. Furthermore, $\tilde P_t(m^0,\emptyset)$ is PSD because it is convex combination of Schur complements of $\tilde H_t(m^0,\emptyset)$ with  respect to $\tilde H_t^{U^1U^1}(m^0,\emptyset)$.

Finally, substituting \eqref{eq:DP_V_part1} and \eqref{eq:DP_V_part2} into \eqref{eq:DP_V_step1}, we observed that $V_t$ has the form described in \eqref{eq:Vt_mode}. This completes the proof of the induction step for the case $\tilde z_t = \emptyset$ and the proof of the theorem for this case. Next we consider the case $\tilde z_t = l_t^1$.

\subsection{$\tilde z_t = l$ for some $l \in \mathcal{M}^1$}
This corresponds to the case that $\gamma_t =1$ from \eqref{Model:channel}. Hence, in this case $z_t = x_t^1$ and from \eqref{belief_theta_gamma_1}, $\theta_t = \delta(x_t^1)$. In the following we calculate $\mathbb{T}_1(l)$ and $\mathbb{T}_2(l)$.

\underline{\textbf{Calculating $\mathbb{T}_1(l)$:}}

Remember that we defined $S_t^{\theta_t,m_t^1} = \vecc(x_t^0, X^{\theta_t},u^0_t, \bar q_t(m_t^1)+ \tilde q_t(X^{\theta_t},m_t^1) )$ where $X^{\theta_t}$ is a random vector with distribution $\theta_t$. In this case, we have

\vspace{-4mm}
\begin{small}
\begin{align}
\label{mean_S_t_non_empty}
&\ee[S_t^{\delta(x_t^1),l}] = \vecc(x_t^0, x_t^1,u^0_t, \bar q_t(l)), \\
&\cov(S_t^{\delta(x_t^1),l}) =  \mathbf{0}.
\label{cov_S_t_non_empty}
\end{align}
\end{small}
According to \eqref{cost_structure} and \eqref{cost_of_new_problem},

\vspace{-4mm}
\begin{small}
\begin{align}
\mathbb{T}_1(l_t^1) &= 
\ee \Big[ c_t^{prs}(x_t^0, X_t^{1}, m_t^0, M_t^{1}, u^{prs}_t)  \Big| x_t^0, m_t^0, \delta(x_t^1), l, u_t^{prs} \Big]  \notag \\
&=  \ee \Big[QF\left(C_t(m_t^{0},l_t^1), S_t^{\delta(x_t^1),l} \right) \Big]
\notag \\
&= QF\left(C_t(m_t^{0},l), \ee[S_t^{\delta(x_t^1),l}]\right) \notag \\
&+  \tr\left(C_t(m_t^{0},l)\cov(S_t^{\delta(x_t^1),l}) \right) \notag \\
& = QF \Big(C_t(m_t^{0},l),  \vecc \big(x_t^0, x_t^1,u^0_t, \bar q_t(l) \big) \Big),
\label{T_1_non_empty}
\end{align}
\end{small}
where the second equality is true because of \eqref{eq:Vinduction_part1_first}.

\underline{\textbf{Calculating $\mathbb{T}_2(l)$:}}

Let $Y_t^{\delta(x_t^1),l}: = [D(m_t^0,l)]_{1\bullet} S_t^{\delta(x_t^1),l} + W_t^1$. Then, from \eqref{eq:psit} in Lemma \ref{lm:beliefupdate}, we know that $Y_t^{\delta(x_t^1),l}$ has distribution $\psi_t^{\triangleleft}(l, \delta(x_t^1))$ and we have,

\vspace{-4mm}
\begin{small}
\begin{align}
\label{mean_alpha_1_non_empty}
&\mu \big (\psi_t^{\triangleleft}(l, \delta(x_t^1)) \big) = \ee[Y_t^{\delta(x_t^1),l}] =
[D(m_t^{0},l)]_{1\bullet} \ee[S_t^{\delta(x_t^1),l}] ,   \\
&\cov\big( \psi_t^{\triangleleft}(l, \delta(x_t^1)) \big) = \cov(W_t^1).
\label{cov_alpha_1_non_empty}
\end{align}
\end{small}
Furthermore, in this case from \eqref{mean_alpha_0} and \eqref{cov_alpha_0}, 

\vspace{-4mm}
\begin{small}
\begin{align}
\label{mean_alpha_0_non_empty}
&\mu(\alpha_t^0) = [D(m_t^{0},l)]_{0\bullet} \ee[S_t^{\delta(x_t^1),l}],  \quad \cov(\alpha_t^0) = \cov(W_t^0).
\end{align}
\end{small}
Now, considering \eqref{mean_alpha_1_non_empty}, \eqref{cov_alpha_1_non_empty}, and \eqref{mean_alpha_0_non_empty},  
$\mathbb{T}_2(l)$ from \eqref{eq:Vterm_5_mode} can be calculate as follows,

\vspace{-4mm}
\begin{small}
\begin{align}
\mathbb{T}_2(l) &= QF \Big(D(m_t^{0},l)^{\tp} \Pi(P_{t+1}) D(m_t^{0},l), \ee[S_t^{\delta(x_t^1),l}] \Big) + e_t,
\label{T_6_non_empty}
\end{align}
\end{small}
where $e_t$ is as described in \eqref{e_t}.

\underline{\textbf{Calculating $\mathbb{T}(l)$:}}

Now that we have calculated $\mathbb{T}_1(l)$ and $\mathbb{T}_2(l)$, we use them to simplify $\mathbb{T}(l)$,
 and then by substituting it into the value function of \eqref{eq:DP_V} we get

\vspace{-4mm}
\begin{small}
\begin{align}
&V_{t}(x_{t}^0, m_{t}^0, \theta_{t}, \tilde z_{t})
=e_t +  \min_{u^0_t \in \R^{d_U^0}, \bar q_t \in \bar{\mathcal{Q}}, \tilde q_t \in \tilde{\mathcal{Q}}(\theta_t) } 
\hspace{-2pt} \Big\lbrace  
\nonumber\\
&QF \Big(H_t(m_t^0,l) \vecc \big(x_t^0, x_t^1,u^0_t, \bar q_t(l) \big) \Big) 
\Big\}.
\label{eq:DP_V_with_T_non_empty}
\end{align}
\end{small}
Note that the term inside the minimization does not depend on function $\tilde q_t$ and also it does not depends on $\bar q_t(m_t^1)$ for all $m_t^1 \in \mathcal{M}^1 \setminus \{l\}$. Hence, they can be chosen arbitrarily or set to be zero as described in \eqref{eq:opt_qbar_mode_l} and \eqref{eq:opt_gamma_mode}. Since $H_t(m_t^0,l)$ is PD, it follows from \cite[Lemma 4]{Ouyang_Asghari_Nayyar:2016} that the optimal solution of \eqref{eq:DP_V_P1} is given by \eqref{eq:opt_ubar_mode} and

\vspace{-4mm}
\begin{small}
\begin{align}
\label{eq:DP_V_part1_non_empty}
&\min
QF \Big(H_t(m_t^0,l) \vecc \big(x_t^0, x_t^1,u^0_t, \bar q_t(l) \big) \Big)  \notag \\
& = 
 QF \Big(P_t(m_t^0,l), \vecc \big(x^0_t,\mu(\delta(x_t^1)) \big) \Big).
\end{align}
\end{small}
Finally, substituting \eqref{eq:DP_V_part1_non_empty} into \eqref{eq:DP_V_with_T_non_empty}, we observed that $V_t$ has the form described in \eqref{eq:Vt_mode}. This completes the proof of the induction step for the case $\tilde z_t = l$ and the proof of the theorem for this case. 

\section{Proof of Theorem \ref{thm:opt_strategies}}
\label{Proof_Thm_optimal_strategies}
Let $\hat x_t^1$ be the estimate (conditional expectation) of $X_t^1$ based on the common information $h^0_t$. Then, for any realization of the common belief $\theta_t$, $\hat x_t^1 = \mu(\theta_t)$.
To show \eqref{eq:estimator_0} and \eqref{eq:estimator_t}, note that 
at time $t=0$, for any realization $h^0_t$ of $H^0_t$,

\vspace{-4mm}
\begin{small}
\begin{align}
&\hat x_0^1 = \mu(\theta_0) = \int y \theta_0(dy) \notag \\
&= 
\Big \lbrace \begin{array}{ll}
\int y \pi_{X_0^1}(dy) = \mu(\pi_{X_0^1}) & \text{ if }z_{0}= \emptyset,\\
 \int y \ind_{\{y\}} (x_0^1)(dy) =  
 x_0^1 & \text{ if }z_{0} = x_0^1.
\end{array} 
\end{align}
\end{small}
Therefore, \eqref{eq:estimator_0} is true. Furthermore, at time $t+1$ and for any realization $h^0_{t+1}$ of $H^0_{t+1}$,
let $\theta_{t+1}$ be the corresponding common belief and $u^{prs*}_t = (u_t^{0*}, \bar q_t^*, \tilde q_t^*)$, then
\begin{align*}
\hat x_{t+1}^1 = \mu(\theta_{t+1}) = \int  y [\psi_t(\theta_t,u^{prs*}_t, x_t^0,m_t^0, \tilde z_t, z_{t+1})] (dy).
\end{align*}
If $z_{t+1} = x_{t+1}^1$, then $\hat x_{t+1}^1 = \int y \ind_{\{y\}} (x_{t+1}^1) (dy) = x_{t+1}^1.$
\\
If $z_{t+1} = \emptyset$, then,

\vspace{-4mm}
\begin{small}
\begin{align}
\hat x_{t+1}^1 &= \int y  \sum_{m_t^1 \in \mathcal{M}^1} \int \int  \mathds{1}_{\{y\}}\big(f_t^1(x_t^{0:1},m_t^{0:1}, u_t^{prs*}, w_t^1)\big)  \notag \\
& \times \theta_t(dx_t^1) \omega^{\tilde z_{t}}(m_t^1) \pi_{W_t^1}(dw_t^1)(dy)
\notag\\
&=\sum_{m_t^1 \in \mathcal{M}^1} \int \int  f_t^1(x_t^{0:1},m_t^{0:1}, u_t^{prs*}, w_t^1)  \notag \\
&\times \theta_t(dx_t^1) \omega^{\tilde z_{t}}(m_t^1)\pi_{W_t^1}(dw_t^1)  \notag \\
&=\sum_{m_t^1 \in \mathcal{M}^1}
[D(m_t^{0:1})]_{1\bullet} \vecc(x_t^0, \hat x_t^1, \bar u^{0*}_t, \bar q_t^{*}(m_t^1))
 \omega^{\tilde z_{t}}(m_t^1),
\label{proof:estimation}
\end{align}
\end{small}
where the third equality is true because 
\begin{small}
\begin{align*}
&\int  y  \mathds{1}_{\{y\}}\big(f_t^1(x_t^{0:1},m_t^{0:1}, u_t^{prs*}, w_t^1)\big)   dy 
\notag \\&
= f_t^1(x_t^{0:1},m_t^{0:1}, u_t^{prs*}, w_t^1),
\end{align*}
\end{small}
Furthermore, the last equality of \eqref{proof:estimation} is true because $\tilde q_t \in \mathcal{\tilde Q}(\theta)$ and $W_t^1$ is a zero mean random vector.
Therefore, \eqref{eq:estimator_t} is true and the proof is complete.

\end{document}